\documentclass[acmsmall, screen, nonacm]{acmart}

\usepackage{comment}
\usepackage{algorithm}

\usepackage{algpseudocode}
\usepackage[capitalize]{cleveref}
\usepackage{braket}
\usepackage{thm-restate}

\newcommand{\cG}{\mathcal{G}}

\begin{document}


\title{(Quantum) complexity of testing signed graph clusterability}

\author{Kuo-Chin Chen}
\affiliation{ 
 \institution{Hon Hai Quantum Computing Research Center}
 \city{Taipei}
 \country{Taiwan}}
\email{Jim.KC.Chen@foxconn.com}

\author{Simon Apers}
\affiliation{ 
 \institution{Universit\'e de Paris, CNRS, IRIF}
 \city{Paris}
 \country{France}}
\email{smgapers@gmail.com}

\author{Min-Hsiu Hsieh}
\affiliation{ 
 \institution{Hon Hai Quantum Computing Research Center}
 \city{Taipei}
 \country{Taiwan}}
\email{min-hsiu.hsieh@foxconn.com}

\begin{abstract}
This study examines clusterability testing for a signed graph in the bounded-degree model. Our contributions are two-fold.
First, we provide a quantum algorithm with query complexity $\tilde{O}(N^{1/3})$ for testing clusterability, which yields a polynomial speedup over the best classical clusterability tester known \cite{adriaens2021testing}. 
Second, we prove an $\tilde{\Omega}(\sqrt{N})$ classical query lower bound for testing clusterability, which nearly matches the upper bound from \cite{adriaens2021testing}.
This settles the classical query complexity of clusterability testing, and it shows that our quantum algorithm has an advantage over any classical algorithm.

\end{abstract}

\maketitle


\section{Introduction} 
  
Property testing \cite{goldreich2010property, ron2010algorithmic} deals with the setting where we wish to distinguish between objects, e.g.~functions \cite{ailon2007estimating, alon2003testing_s, kaufman2008algebraic} or graphs \cite{alon2003testing, alon2002testing, alon2008testing, alon2005every, apers2018quantum, alon2002testing_g}, that satisfy a predetermined property and those that are far from satisfying this property.  
For certain properties, this relaxed setting allows for algorithms to query only a small part of (sometimes huge) data sets.
Indeed, the goal in property testing is to design so-called \emph{property testers} to solve a property testing problem within sublinear time complexity.
Property testing has been studied in many settings, such as computational learning theory \cite{goldreich1998property, bansal2004correlation, ron2008property, gur2021sublinear, diakonikolas2007testing, fischer2004testing}, quantum information theory \cite{montanaro2013survey, buhrman2008quantum, datta2013smooth, bubeck2020entanglement, nagy2018quantum, ambainis2016efficient}, coding theory \cite{goldreich2005short, trevisan2004some, jutla2004testing, kaufman2006testing, lin2022c, panteleev2022asymptotically, dinur2022locally}, and so on. 
This witnesses the significant attention that property testing has drawn from the academic community. 

An interesting setting is that of graph property testing. In the \emph{dense graph model}, it was shown that a constant number of queries are needed to test a wide range graph partition properties \cite{goldreich1998property}, including $k$-colorability, $\rho$-clique, and $\rho$-cut for any fixed $k\geq 2$ and $\rho > 0$. 
For comparison, in the \emph{bounded-degree model} \cite{goldreich1997property} similar graph properties such as bipartiteness and expansion testing require sublinear $\tilde{\Theta}(\sqrt{N})$ classical queries.
Moreover, some graph properties even have a (trivial) $\Omega(N)$ query complexity, as Ref. \cite{bogdanov2002lower} showed for 3-colorability in the bounded-degree model.
While there have been numerous studies on testing graph properties, there has been little work on testing the properties of \emph{signed} graphs. 

A signed graph is a graph where each edge is assigned a positive or a negative label. 
They can be applied to model a variety of problems including correlation clustering problems \cite{bansal2004correlation, demaine2006correlation}, modeling the ground state energy of Ising models \cite{kasteleyn1963dimer}, and social network problems \cite{harary1953notion,leskovec2010signed,tang2016survey}. 
Signed graphs have different properties than unsigned graphs.  
One of these is the important property of clusterability, which was introduced by Davis \cite{davis1967clustering} to describe the correlation clustering problem.  
We call a signed graph clusterable if it can be decomposed into several components such that (1) the edges in each component are all positive, and (2) the edges connecting the vertices belonging to different components are all negative.  
This property is equivalent to not having a ``bad cycle'', which is a cycle with exactly one negative edge \cite{davis1967clustering}.  
An algorithm for testing clusterability in the bounded-degree model with only $\tilde O(\sqrt{N})$ was proposed in \cite{adriaens2021testing}.  
The optimality of this clusterability tester was left as an open question.  
Here, we prove that any classical algorithm requires at least~$\tilde{\Omega}(\sqrt{N})$ queries to test clusterability, showing that the tester from \cite{adriaens2021testing} is nearly-optimal.

As a natural extension of past studies, we are interested in whether quantum computing can provide any advantages in testing clusterability for signed graphs.  
To the best of our knowledge, we are not aware of any previous work on the quantum advantage for testing the properties of signed graphs.
However, in work by Ambainis, Childs and Liu \cite{ambainis2011quantum}, a quantum speedup for testing bipartiteness and expansion of bounded-degree graphs was shown.
We adopt these techniques to obtain a quantum algorithm for testing clusterability in signed graphs. 
More specifically, we combine their quantum approach with the classical property testing techniques provided by Adriaens and Apers \cite{adriaens2021testing} to obtain a quantum algorithm for testing the clusterability of bounded-degree graphs in time $\tilde{O}(N^{1/3})$.
This outperforms the $\tilde{O}(\sqrt{N})$ query complexity of the classical tester in \cite{adriaens2021testing} (which is optimal by our lower bound).
We leave optimality of the quantum algorithm for testing clusterability as an open question.
Indeed, settling the quantum query complexity of property testing in the bounded-degree graph model has been a long open question, and even for the well-studied problem of bipartiteness testing no matching lower bound is known \cite{ambainis2011quantum}.

\subsection{Overview of Main Results} 

Here we formally state our main results (precise definitions are deferred to \cref{sec:prelims}).
First, we prove a lower bound on the classical query complexity of clusterability testing for a signed graph. 
{ 
\renewcommand{\thetheorem}{\ref{Theorem:lower_bound}} 
\begin{theorem}[Restated] 
Any classical clusterability tester with error parameter $\epsilon=0.01$ must make at least $\sqrt{N}/10$ queries.
\end{theorem}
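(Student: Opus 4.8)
The plan is to apply Yao's minimax principle: it suffices to construct a distribution $\mathcal{D}_{\mathrm{yes}}$ supported on clusterable signed graphs and a distribution $\mathcal{D}_{\mathrm{no}}$ supported (with high probability) on signed graphs that are $0.01$-far from clusterable, such that no deterministic algorithm making fewer than $\sqrt{N}/10$ queries can separate a sample of $\mathcal{D}_{\mathrm{yes}}$ from a sample of $\mathcal{D}_{\mathrm{no}}$ with the advantage required of a tester. Both distributions share the same underlying graph: a uniformly random $d$-regular multigraph $H$ on $N$ vertices from the configuration model, for a suitably large constant $d$. For $\mathcal{D}_{\mathrm{yes}}$ we draw a uniformly random balanced $2$-colouring of $V(H)$ and sign an edge $+$ if it is monochromatic and $-$ otherwise; the result is a balanced, hence clusterable, signed graph. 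For $\mathcal{D}_{\mathrm{no}}$ we sign each edge of $H$ independently $+$ or $-$ with probability $1/2$.

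The first step is to show that $\mathcal{D}_{\mathrm{no}}$ is concentrated on instances that are $0.01$-far from clusterable. I will use that the edit distance to clusterability is at least the minimum correlation-clustering cost $\min_{Q}\bigl(\#\{\text{positive edges cut by }Q\}+\#\{\text{negative edges internal to }Q\}\bigr)$, the minimum over all partitions $Q$ of $V$: a clustering witnessing a cheap edit disagrees with the original signs on at most as many edges as were edited. I then lower bound this cost by a dichotomy that avoids any union bound over partitions. Either $Q$ cuts more than $\epsilon d N$ positive edges, and we are done; or it cuts at most $\epsilon d N$ of them, in which case deleting those edges from the positive subgraph $G^{+}$ still leaves a component on $(1-\delta)N$ vertices — this uses the high-probability structural fact that, for $d$ large enough, $G^{+}$ (a percolation of $H$ at density $1/2$, well above the giant-component threshold) has a giant component spanning a $(1-o(1))$-fraction of the vertices and with constant edge expansion, so deleting $O(\epsilon d N)$ edges cannot shatter it. Since that surviving component is positively connected, it lies inside a single part of $Q$; that part then captures essentially half of the $\approx dN/2$ edges, of which about half are negative, giving $\Omega(dN)\gg \epsilon d N$ negative internal edges, a contradiction. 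Hence every $Q$ has cost above $\epsilon d N$.

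The second step is indistinguishability. A deterministic algorithm making $q$ queries uncovers at most $q$ edges of $H$, meeting at most $2q$ vertices; by the standard birthday estimate for the configuration model, the uncovered edges form a forest except with probability at most $2q^{2}/N$. Conditioned on seeing only a forest, the transcript has the same law under $\mathcal{D}_{\mathrm{yes}}$ and $\mathcal{D}_{\mathrm{no}}$: the graph part is identically distributed, and the observed signs are i.i.d.\ uniform in both cases — by construction for $\mathcal{D}_{\mathrm{no}}$, and for $\mathcal{D}_{\mathrm{yes}}$ because on a forest the sign pattern induced by a uniform $2$-colouring is itself uniform ($2$-to-$1$ correspondence between colourings and sign patterns on a forest). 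Hence the total variation distance between the algorithm's outputs under the two distributions is at most $2q^{2}/N$, which is at most $1/50$ once $q<\sqrt N/10$. Together with the arbitrarily small error from ``$\mathcal{D}_{\mathrm{no}}$ is far only with high probability'', no such algorithm can accept clusterable instances with probability $\geq 2/3$ and reject $0.01$-far instances with probability $\geq 2/3$, which proves the claimed bound.

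The main obstacle is the first step — that random signed regular graphs are far from clusterable. The naive Chernoff-plus-union-bound approach fails because concentration gives only an $e^{-\Theta(N)}$ tail while there are $2^{\Theta(N\log N)}$ partitions; the role of the Case A / Case B dichotomy above is precisely to reduce far-ness to two structural statements about $H$ and the signs that hold deterministically on high-probability events, so no union bound over partitions is ever needed. Making the constants fit forces $d$ to be a large enough constant, which is harmless since the indistinguishability bound $2q^{2}/N$ does not depend on $d$.
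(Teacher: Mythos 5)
Your proposal is correct in outline and shares the paper's high-level blueprint (two locally indistinguishable input distributions plus a birthday bound of order $q^2/N$), but the construction and the far-ness argument are genuinely different. The paper builds two fully explicit degree-$6$ families from a Hamiltonian cycle together with unions of cycles prescribed by permutations of ten vertex labels; far-ness of $\mathcal{G}_1^N$ is proved by a hands-on component-merging argument showing that the positive edges keep the graph connected even after $0.06N$ deletions, so every negative edge still closes a bad cycle; and indistinguishability is proved by exhibiting lazy sampling processes $P_1,P_2$ whose answer distributions agree query-by-query until a collision occurs. You instead put the same random $d$-regular (configuration-model) graph under both distributions and vary only the signs: i.i.d.\ uniform for the no-case versus induced by a random $2$-colouring for the yes-case, so that indistinguishability reduces to the neat observation that a uniform colouring induces uniform signs on a forest, and far-ness follows from the correlation-clustering cost bound plus edge expansion of the positive subgraph. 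Your route is more modular and avoids designing the permutations and the explicit lazy processes, but it pays for this with a large (unspecified) constant degree $d$ and with reliance on percolation/expansion facts that, contrary to your remark, still require a union bound over the $2^N$ two-part cuts (just not over arbitrary partitions $Q$) and hence exponential concentration in $dN$; the paper's argument gets $d=6$ with elementary counting. Two small points to tighten: use an i.i.d.\ rather than balanced colouring (a balanced colouring makes the forest sign pattern only approximately uniform, costing an extra $O(q/N)$ in total variation, whereas i.i.d.\ gives exact uniformity and still yields a clusterable graph), and handle the configuration model's self-loops and multi-edges (negative self-loops would break clusterability of the yes-instances, though under the colouring rule self-loops are automatically positive).
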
 
\addtocounter{theorem}{-1} 
} 
Up to polylogarithmic factors this matches the upper bound from \cite{adriaens2021testing}, thus proving that their clusterability tester is optimal in the classical computing regime.
However, taking inspiration from this classical clusterability tester, we reduce the clusterability testing problem to a collision finding problem which can be solved faster by quantum computing.  
As a result, we propose a quantum clusterability tester with a query complexity $\tilde{O}(N^{1/3})$.  
  
{ 
\renewcommand{\thetheorem}{\ref{theorem_quantum_tester}} 
\begin{theorem}[Restated] 
We propose a quantum clusterability tester with query complexity~$\tilde{O}(N^{1/3})$.
\end{theorem}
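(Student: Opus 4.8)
The plan is to follow the two-layer template of Ambainis, Childs and Liu \cite{ambainis2011quantum} for their quantum bipartiteness and expansion testers, but to feed in the clusterability analysis of Adriaens and Apers \cite{adriaens2021testing} in place of the combinatorial input used in \cite{ambainis2011quantum}. The point is that the $\tilde O(\sqrt N)$ classical clusterability tester of \cite{adriaens2021testing} really solves a \emph{collision-finding} problem on the endpoints of short random walks, and collision finding is exactly where a quantum walk buys a cube-root-type speedup.

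\textbf{Classical structure (Step 1).} First I would extract from \cite{adriaens2021testing} the structural fact behind their tester: there is a walk length $\ell = \mathrm{poly}(1/\epsilon,\log N)$ and a set $A \subseteq [N]$ with $|A| = \Omega(N)$ such that, for every ``good'' start vertex $v \in A$, a length-$\ell$ random walk from $v$ produces a \emph{labeled endpoint} $(u,\sigma) \in [N] \times \{0,1\}$, where $\sigma$ is a cheaply updatable function of the walk encoding the sign information needed to witness a bad cycle. The induced distribution $p_v$ on $(u,\sigma)$ is roughly uniform on a set of size $\Omega(N)$, and, crucially, if $G$ is $\epsilon$-far from clusterable then a constant fraction of endpoint collisions are \emph{sign-inconsistent}: two independent samples $(u_1,\sigma_1),(u_2,\sigma_2)$ satisfy $u_1 = u_2$ and $\sigma_1 \neq \sigma_2$ with probability $\Omega(1/N)$. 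Concatenating the two walks behind such a collision yields a closed walk containing a bad cycle, so if $G$ is clusterable no sign-inconsistent collision ever occurs, which gives a one-sided-error tester.

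\textbf{Quantum collision finding (Step 2).} Fix a good $v$ and regard $m = \Theta(\sqrt N)$ independent length-$\ell$ walks from $v$ as a function $i \mapsto (u_i,\sigma_i)$; by the birthday bound a sign-inconsistent collision among the $u_i$ is present with constant probability whenever $G$ is far from clusterable. Classically one must look at all $\Theta(\sqrt N)$ walks, but the element-distinctness quantum walk underlying \cite{ambainis2011quantum}, in its ``find a pair satisfying a given relation'' form, locates a pair $i \neq j$ with $u_i = u_j$ and $\sigma_i \neq \sigma_j$ using only $O(m^{2/3}) = O(N^{1/3})$ evaluations of $i \mapsto (u_i,\sigma_i)$; since one evaluation costs $\ell = \mathrm{polylog}(N)$ graph queries, the whole subroutine uses $\tilde O(N^{1/3})$ graph queries. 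As in \cite{ambainis2011quantum}, making this work requires implementing the walk coherently — preparing the superposition $\sum_i \ket{\mathrm{walk}_i}\ket{u_i}\ket{\sigma_i}$ over the $m$ walks, testing the collision relation in superposition, and uncomputing the walk register — after which the Johnson-graph quantum walk and the amplitude-amplification bookkeeping of \cite{ambainis2011quantum} transfer essentially verbatim, with the parity labels of \cite{ambainis2011quantum} replaced by the sign labels $\sigma$. Running this subroutine from $O(1/\epsilon)$ random start vertices and rejecting iff some run returns a sign-inconsistent collision yields a clusterability tester of query complexity $\tilde O(N^{1/3})$, which is \cref{theorem_quantum_tester}.

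\textbf{Main obstacle.} I expect the real work to be at the interface of Steps 1 and 2: verifying that the Adriaens--Apers analysis can be cast in exactly the ``birthday'' form above, i.e.\ that the label $\sigma$ can be chosen to be a deterministic, easily uncomputable function of the walk for which being $\epsilon$-far from clusterability forces an $\Omega(1)$ fraction of endpoint collisions to be sign-inconsistent, with the polynomial dependence on $\epsilon$ intact. A secondary and more routine point — already handled in \cite{ambainis2011quantum} — is robustness of the quantum collision subroutine to the facts that $p_v$ is only approximately uniform and that the walk is truncated at length $\ell$; this is addressed by the standard ``flat distribution'' reduction together with a bound on the truncated-walk error, and one checks that adjoining the label register does not spoil it.
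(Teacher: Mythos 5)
Your high-level strategy -- recast the Adriaens--Apers tester as collision finding among $\tilde{O}(\sqrt N)$ short random walks and apply Ambainis's quantum algorithm for finding a pair satisfying a relation, for a cost of $O\bigl((\sqrt N)^{2/3}\bigr)\cdot\mathrm{polylog}(N)=\tilde O(N^{1/3})$ -- is exactly the paper's strategy, and your complexity accounting matches. The gap is in the one place you yourself flag as the main obstacle: the choice of collision relation. You propose walks on the whole signed graph carrying a ``cheaply updatable'' label $\sigma$, with a collision being two walks that reach the \emph{same endpoint} with \emph{inconsistent labels}. This is the Goldreich--Ron/Ambainis--Childs--Liu bipartiteness template, and in its natural instantiation ($\sigma=$ parity of negative edges traversed) a sign-inconsistent collision certifies a closed walk with an \emph{odd} number of negative edges -- i.e.\ it witnesses imbalance, not non-clusterability. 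Clusterability is violated by a bad cycle, a cycle with \emph{exactly one} negative edge, and no single endpoint label of the kind you describe is known to reduce bad-cycle detection to same-endpoint collisions; nothing of that ``birthday'' form is proved in \cite{adriaens2021testing}, so Step 1 as stated is an unproven (and, for the parity label, false) structural claim.

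The paper's resolution is different and simpler: in \cref{Algo:quantum_tester} all $K=\tilde O(\sqrt N)$ walks start from the same vertex $s$ and are restricted to \emph{positive} edges only; the function $f$ returns the walk's endpoint \emph{together with its neighborhood}, and a collision is a pair of endpoints joined by a \emph{negative} edge. Concatenating the two positive walks with that negative edge gives a closed walk with exactly one negative edge, hence a bad cycle, so clusterable graphs never produce a collision (one-sided error), and Claim 14 of \cite{adriaens2021testing} directly supplies the constant collision probability when $G$ is $\epsilon$-far from clusterable. With that relation in place, the rest of your argument goes through via \cref{lemma:collision_finding}; the only remaining detail you gloss over -- making $i\mapsto f(i)$ a genuine deterministic function evaluable in superposition without $\tilde\Omega(\sqrt N)$ random bits -- is handled in the paper by generating the walks' coin flips from $k$-wise independent variables as in \cref{proposition:random_variable}.
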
 
\addtocounter{theorem}{-1} 
} 
\noindent
This improves over the classical lower bound, implying a quantum advantage over classical algorithms for testing clusterability.
  

\subsection{Technical contributions} 
A sketch of the proof of our two results is given in this section. 
The first result is the classical query lower bound for testing clusterability.
While the bound follows the blueprint of the lower bound for bipartiteness testing by Goldreich and Ron \cite{goldreich1997property}, we have to deal with a number of additional complications in the signed graph setting.

The main idea of the lower bound is to show that, with less than $\sqrt{N}/10$ queries, we cannot distinguish two families of graphs: one family $\mathcal{G}_1^N$ that is $\epsilon$-far from clusterable, and another family $\mathcal{G}_2^N$ that is clusterable. 
The design of these two families of graphs must take into account two constraints.  
The first constraint is that the graphs in $\mathcal{G}_{2}^{N}$ cannot contain a bad cycle, while those in $\mathcal{G}_{1}^{N}$ must have at least one bad cycle, even if we remove $\epsilon N d$ edges of the graph.  
This ensures that $\mathcal{G}_2^N$ is clusterable, while $\mathcal{G}_1^N$ is far from clusterable.
The second constraint relates to the fact that both graph should be locally indistinguishable.
This requires for instance that vertices in each graph in both families are incident to the same number of positive and negative edges.
If in addition we can ensure that each cycle in these graphs contains many edges with a constant probability, then we can use this to show that no algorithm can distinguish the graphs in these two families with $o(\sqrt{N})$ queries.
Indeed, we show that these two families of graphs are indistinguishable with less than $\sqrt{N}/10$ queries as follows.
First, we propose two random processes ${P}_1$ and ${P}_2$, one generates a uniformly random graph in $\mathcal{G}_{1}^{N}$, and the other generates a uniformly random graph in $\mathcal{G}_{2}^{N}$.  
Specifically, ${P}_{\alpha}$ for $\alpha\in \{1,2\}$ takes a query given from an algorithm as input and returns a vertex while ``on-the-fly'' or ``lazily'' constructing a graph from $\mathcal{G}^{N}_{\alpha}$.  
In other words, $P_{\alpha}$ simulates how an algorithm interacts with a graph sampled uniformly in $\mathcal{G}^{N}_{\alpha}$.  
We observe that these random processes are statistically identical if the answer to each query is not found in the past answers or queries, which is equivalent to not finding a cycle when exploring a graph. 
Second, we demonstrate that the probability of these random processes being statistically identical is greater than $1/4$ within $\sqrt{N}/10$ queries. In other words, no classical algorithm can distinguish between these two families with a probability exceeding $3/4$ within $\sqrt{N}/10$ queries to the input graph.

Our second result is a quantum algorithm for clusterability testing with a better query complexity.
To this end, we reduce the main procedure in the algorithm proposed by Adriaens and Apers \cite{adriaens2021testing} to a collision finding algorithm.
This collision finding problem can then be solved using the quantum collision finding algorithm, similar to \cite{ambainis2011quantum}.  
The main idea is that if we implement several random walks on the positive edges of a graph that is far from clusterable, then there exists a negative edge between the vertices belonging to distinct random walks with a constant probability.  
We define finding such a negative edge between random walks as finding a collision, a process that can be solved by using a quantum collision finding algorithm.
This yields a quantum speedup for clusterability testing.

\section{Preliminaries} \label{sec:prelims}
This section contains two parts.
\cref{Terminologies} defines some of the basic terminology associated with the graphs used in this paper. 
In \cref{Clusterability}, we introduce the graph clusterability testing problem.
  
\subsection{Terminology} \label{Terminologies} 
  
A graph $G = (V, E)$ is a pair of sets.  
The elements in $V=[N]$ are vertices, and the elements in $E$, denoted by edges, are paired vertices. 
The vertices $v \in V$ and $u \in V$ of an edge $(v, u)\in E$ are the endpoints of $(v, u)$, and $(v, u)$ is incident to $u$ and $v$. 
The vertices $u$ and $v$ are adjacent if there exist an edge $(v, u)\in E$.
The number of edges incident with $v$, denoted by $d(v)$, is the degree of a vertex, and the maximum degree among the vertices in $G$ is the degree of the graph $G(V, E)$. 

Given a graph $G = (V, E)$, a walk is a sequence of edges $( (v_1,v_2) , (v_2,v_3), \cdots,  (v_{J-1},v_J) )$ where $(v_j,v_{j+1}) \in E$ for all $1 \leq j \leq J-1$ and $v_j \in V$ for all $1 \leq j \leq J$. 
This walk can also be denoted as a sequence of vertices $(v_1,v_2, \ldots, v_J)$. 
A trail is a walk in which all edges are distinct.
A cycle is a non-empty trail in which only the first and last vertices are equal.
A Hamiltonian cycle is a cycle of a graph in which every vertex is visited exactly once.

A signed graph $G=( 
V, E, \Sigma)$ consists of the vertex set $V$, the edge set $E \subseteq V \times V$, and a mapping $\Sigma: E \rightarrow\{+,-\}$ that indicates the sign of each edge.  
We say that a signed graph $G=(V, E, \Sigma)$ is clusterable if we can partition vertices into components such that (i) every edge that connects two vertices in the same components is positive, and (ii) every edge that connects two vertices in different components is negative.

\subsection{Clusterability testing for signed graphs} \label{Clusterability} 
  
We can easily modify the usual graph query model to signed graphs.  
Given a signed graph $G$ with maximum degree $d$, the bounded-degree graph model is defined as follows. 
A query is a tuple $(v,i)$ where $v\in[N]$ is a vertex in the graph and $i \in[d]$. 
The oracle answers this query with (i) the $i^{\text {th}}$ neighbor of the vertex $v$ if the degree of $v$ is larger than $i$ (otherwise it returns an error symbol), and (ii) the sign of the corresponding edge.
  
Property testing in the bounded-degree model is described as follows.  
Given oracle access to a graph $G$ with degree bound $d$ and $|V|=N$, we wish to distinguish whether the graph $G$ satisfies a certain property, or whether it is $\epsilon$-far from any graph having that property, where $\epsilon \in (0,1]$ is an error parameter. 
Here we say that two graphs $G$ and $G'$ are $\epsilon$-far from each other if we have to add or remove at least $\epsilon N d$ edges to turn $G$ into $G'$. 
The specific case of clusterability testing is defined formally as follows.
\begin{definition} 
A clusterability testing algorithm is a randomized algorithm that has query access to a signed graph $G(V,E,\Sigma)$ with $|V|=N$ and maximum degrees $d$. Given an error parameter $\epsilon$, the algorithm behaves as follows:  
\begin{itemize} 
   \item If $G$ is clusterable, then the algorithm should accept with probability at least $2/3$. 
   \item If $G$ is $\epsilon$-far from clusterable, then the algorithm rejects with probability at least $2/3$. 
\end{itemize} 
\end{definition}

\section{Main Results and Proofs} 

In this section, we give the formal statements and proofs of our two main results -- a classical query lower bound for clusterability testing and a quantum clusterability tester. 
In \cref{CTheorem:lower_bound}, we first give the classical query lower bound of $\Omega(\sqrt{N})$ for clusterability testing.  
This result claims the optimality of the classical clusterability tester in \cite{adriaens2021testing}. 
In \cref{QCT}, we provide a quantum clusterability tester with query complexity $\tilde{O}(N^{1/3})$ which outperforms the classical clusterability tester in \cite{adriaens2021testing}.

\subsection{Classical query lower bound for testing clusterability}\label{CTheorem:lower_bound} 
In this section, we derive a classical query lower bound for the clusterability testing problem. Specifically, we show that testing the clusterability of a signed graph with $N$ vertices 
requires at least $\sqrt{N}/10$ queries.  
  
\begin{theorem}\label{Theorem:lower_bound} 
Given a signed graph $G$ with $N$ vertices, testing clusterability of $G$ with error parameter $\epsilon = 0.01$ requires at least $ \sqrt{N}/10$ queries. 
\end{theorem}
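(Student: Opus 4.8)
The plan is to prove the bound with Yao's principle: exhibit a distribution over which no deterministic (hence no randomized) tester making fewer than $\sqrt{N}/10$ queries can be correct. I would fix a small constant degree (say $d=6$, realized as three positive and three negative incident edges per vertex) and define two distributions on $d$-regular signed graphs on $[N]$. In the ``far'' family $\mathcal{G}_1^N$, the positive edges form a uniformly random $3$-regular graph and the negative edges an independent, edge-disjoint uniformly random $3$-regular graph. In the ``clusterable'' family $\mathcal{G}_2^N$, I first split $[N]$ into two halves $A,B$ of size $N/2$, put a random $3$-regular positive graph inside each half and a random $3$-regular (bipartite) negative graph between the halves; every such graph is clusterable with clusters $A,B$. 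Both families give every vertex three positive and three negative incident edges and are locally tree-like, which is what makes them locally indistinguishable.

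The first step is to show that $G\sim\mathcal{G}_1^N$ is $0.01$-far from clusterable with probability $\ge 5/6$. I would use the stated characterization that $G$ is clusterable iff no negative edge has both endpoints in one connected component of the positive subgraph $G^+$. If $G'$ is clusterable with cluster partition $\{C_j\}$, then turning $G$ into $G'$ must edit (delete, insert, or sign-flip) every positive edge of $G$ crossing $\{C_j\}$ and every negative edge of $G$ lying inside some $C_j$; moreover inserting positive edges and flipping negative edges to positive only merge positive components, so these cannot help destroy bad cycles. Now use that a random $3$-regular graph $G^+$ is, with high probability, an edge-expander whose isoperimetric number is bounded below by a constant: deleting fewer than $0.01\,Nd$ edges from it still leaves a connected component $C$ on more than, say, $0.6N$ vertices, so some cluster $C'\supseteq C$ has $|C'|>0.6N$; but since $G^-$ is $3$-regular, at least $\tfrac{3N}{2}-3\cdot 0.4N = 0.3N > 0.01\,Nd$ of its edges lie inside $C'$, and each of them must be edited. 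This exceeds the edit budget, so $G$ remains non-clusterable.

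The second step is to show that $\mathcal{G}_1^N$ and $\mathcal{G}_2^N$ are statistically indistinguishable within $q<\sqrt{N}/10$ queries, by simulating the oracle on the fly with processes $P_1,P_2$ that answer each query $(v,i)$ by lazily sampling a fresh neighbour and sign consistent with the respective distribution. The crucial observation is that as long as no answer ever coincides with a previously revealed vertex --- equivalently, the explored subgraph never contains a cycle --- each answer is a uniformly fresh vertex with the prescribed sign, so the transcripts produced by $P_1$ and by $P_2$ have exactly the same distribution; the two-cluster structure of $\mathcal{G}_2^N$ becomes detectable only once a cycle (for instance a bad cycle, impossible in $\mathcal{G}_2^N$) is closed. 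A standard birthday/deferred-decisions bound shows that with $q<\sqrt{N}/10$ queries the probability of ever hitting a previously seen vertex is $O(q^2 d/N)$, a small constant. Combining with the first step, the two transcript distributions are within a small constant total variation distance, which contradicts the existence of a tester that accepts clusterable graphs with probability $\ge 2/3$ and rejects $0.01$-far graphs with probability $\ge 2/3$; by Yao's principle this proves the $\sqrt{N}/10$ lower bound.

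I expect the main obstacle to be the first step: designing the two families so that $\mathcal{G}_1^N$ is \emph{robustly} far from clusterable while remaining locally identical to the clusterable family $\mathcal{G}_2^N$. Unlike the bipartiteness lower bound of Goldreich and Ron \cite{goldreich1997property}, ``far from clusterable'' must be defended against all edit types (deletions, insertions, and sign flips), and the argument couples the positive and negative subgraphs --- it needs the positive subgraph to be a sufficiently good expander and the negative subgraph to remain dense inside every large vertex set --- which is precisely why both families must match the positive/negative degree profile rather than merely the total degree, and why one must be somewhat careful in choosing the constant $d$. Once the families are in place, the lazy-oracle coupling and the birthday bound of the second step are routine.
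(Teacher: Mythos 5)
Your proposal follows the same high-level blueprint as the paper's proof (which is itself modelled on the Goldreich--Ron bipartiteness lower bound): two degree-regular families of signed graphs with identical local positive/negative degree profiles, one clusterable by construction and one robustly far from clusterable; lazy oracle simulation via processes $P_1,P_2$; and a birthday bound showing the transcripts agree unless a cycle is closed, which happens with probability $O(q^2/N)$. The constructions differ genuinely, though. You take the positive and negative subgraphs to be independent random $3$-regular graphs (respectively, random $3$-regular graphs inside and across a hidden random bipartition), and you argue farness by combining the edge expansion of a random cubic graph with a degree-counting bound on the number of negative edges inside any large cluster. The paper instead builds both families from explicitly labelled unions of cycles (a positive Hamiltonian ``arc'' cycle plus permutation-defined cycle covers), which buys two things: a much simpler farness argument --- once the positive subgraph is shown to stay connected after deleting $0.06N$ edges, \emph{every one} of the $N$ negative edges closes a bad cycle, so no counting inside a large component is needed --- and a completely explicit lazy process whose transition probabilities can be matched exactly between the two families. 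Your hidden-bipartition trick achieves the same indistinguishability, and that part of your argument is sound.

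The one step I would not accept as written is the quantitative farness claim for $\mathcal{G}_1^N$ with positive degree $3$ and $\epsilon=0.01$. Your edit budget is $0.01\,Nd = 0.06N$ edges, and you need that deleting $0.06N$ positive edges leaves a component $C'$ with $|C'| > 0.52N$ (so that the $3$-regular negative graph has more than $\frac{3(2|C'|-N)}{2} > 0.06N$ internal edges). If every surviving component had size at most $0.52N$, a greedy grouping yields a side $S$ with $0.24N \le |S| \le N/2$ all of whose boundary edges were deleted, so you need every such $S$ to satisfy $e^+(S,\bar S) > 0.06N$, i.e.\ edge expansion at least $0.25$ per vertex on linear-sized sets. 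The expansion guarantees available for random cubic graphs (isoperimetric number on the order of $0.1$--$0.2$) do not clearly deliver this, so the inequality does not close with your stated constants. This is fixable --- take the positive subgraph to be a random $d^+$-regular graph for a larger constant $d^+$ (the expansion grows like $d^+/2-\sqrt{d^+\ln 2}$ while the budget grows only like $0.01(d^++d^-)$), or adopt the paper's device of making the positive subgraph contain a Hamiltonian cycle so that a single surviving positive component forces \emph{all} negative edges to be edited --- but as written the farness step has a genuine numerical gap that you should close explicitly.
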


\begin{proof} 

The proof consists of three main steps.
First, we construct two families of graphs denoted as $\mathcal{G}_{1}^{N}$ and $\mathcal{G}_{2}^{N}$, each possessing specific desirable properties. In particular, we require that most graphs within $\mathcal{G}_{1}^{N}$ are at least $0.01$-far from being clusterable, while graphs within $\mathcal{G}_{2}^{N}$ are inherently clusterable.
The construction and analysis of these families is deferred to \cref{sec:4_1}.

To prove \cref{Theorem:lower_bound}, we illustrate the interaction between an arbitrary $T$-query clusterability testing algorithm $\mathcal{A}$ and a graph $g$ uniformly sampled from $\cG^N_{\alpha}$ as follows:

For all $t \leq T$, each query $q_{t}$ is represented as a tuple $\left(v_{t}, i_{t}\right)$, and the answer to $q_{t}$ is denoted as $a_t$, where $v_{t}, a_{t} \in [N]$ and $i_{t} \in[6]$. It is crucial to note that each query $q_{t}$ corresponds to an edge in $g$, specifically the edge $(v_{t},a_{t})$.
We additionally denote a list of tuples $h=[(q_1,a_1),(q_2,a_2),\ldots,(q_t,a_t)]$ as the query-answer history. 
This history is generated by the interaction between $\mathcal{A}$ and $g$ in the following manner:
For each $t \leq T$, $\mathcal{A}$ maps $h$ to $q_{t+1}$ and ultimately to either accept or reject for $t=T$.
For a given history $h=[(q_1=(v_1,i_1),a_1),\ldots,(q_t=(v_t,i_t),a_t)]$, we say that a vertex $u$ is in $h$ if $u=v_{t'}$ or $u=a_{t'}$ for some $t'\in [t]$.

Secondly, we introduce two processes, denoted as $P_{\alpha}$ for $\alpha\in \{1,2\}$, which simulate how an algorithm $\mathcal{A}$ interacts with a graph sampled uniformly from $\mathcal{G}^{N}_{\alpha}$.
To be more specific, we consider that $\mathcal{A}$ interacts with a graph $g$ sampled from $\mathcal{G}^N_\alpha$ and generates the query-answer history $h$.
We must have that the graph $g$ is uniformly distributed in ${\mathcal{G}}^{N}_{\alpha,h} \subseteq \mathcal{G}^{N}_{\alpha}$, where ${\mathcal{G}}^{N}_{\alpha,h}$ includes all graphs that produce the query-answer history $h$ during interactions with $\mathcal{A}$.
Therefore, if $\mathcal{A}$ makes a query $q_{t+1} \notin \{q_i\}_{i=1}^{t}$ to a graph uniformly sampled from ${\mathcal{G}}^{N}_{\alpha,h}$, we can determine that the answer corresponds to a certain vertex $u\in[N]$ with a specific probability denoted as $\mathrm{p}_u$.
The random processes $P_{\alpha}$ are precisely defined to return the answer $u$ with the corresponding probability $\mathrm{p}_u$ when responding to the query $q_{t+1}$ (initiated by $\mathcal{A}$) and considering the history $h$.
As a result, these two random processes, $P_{\alpha}$, interact with $\mathcal{A}$, providing responses to $\mathcal{A}$'s queries while simultaneously constructing a graph uniformly distributed in $\mathcal{G}_{\alpha}^N$.
The description and analysis of these random processes are deferred to \cref{sec:4_2}.



In the third part, we demonstrate that no algorithm can with high probability differentiate between query-answer histories generated during the interactions of $\mathcal{A}$ with $P_1$ and $P_2$ while making less than $\sqrt{N}/10$ queries. 
To prove such indistinguishability, we examine the distribution of query-answer histories of length $T$ denoted as $\mathbf{D}
_{\alpha}^{\mathcal{A}}$, where each element in $\mathbf{D}_{\alpha}^{\mathcal{A}}$ is generated through the interactions of $\mathcal{A}$ and $P_{\alpha}$. The statistical difference between $\mathbf{D}_1^{\mathcal{A}}$ and $\mathbf{D}_2^{\mathcal{A}}$ is defined as follows:
  
\begin{align*} 
\frac{1}{2} \cdot \sum_{x}\left|\operatorname{Prob}\left[\mathbf{D}_1^{\mathcal{A}}=x\right]-\operatorname{Prob}\left[\mathbf{D}_2^{\mathcal{A}}=x\right]\right|,
\end{align*}
where $x$ is some query-answer history of length $T$.  
We then provide an upper bound on this statistical difference in the following lemma. The proof of this lemma is a modification of the proof of Lemma 7.4 in \cite{goldreich1997property}, and we defer its proof to \cref{sec:4_3}.

\begin{lemma}[{based on \cite{goldreich1997property}, Lemma 7.4)}] \label{lemma:statistical_diff} 
Let $\delta<\frac{1}{2}, T \leq \delta \sqrt{N}$ and $N \geq 40T$. For every algorithm $\mathcal{A}$ that uses $T$ queries, the statistical distance between $\mathbf{D}_{1}^{\mathcal{A}}$ and $\mathbf{D}_{2}^{\mathcal{A}}$ is at most $10 \delta^{2}$. 
\end{lemma}

Finally, we establish \cref{Theorem:lower_bound} through a proof by contradiction. Let us assume the existence of a clusterability tester $\mathcal{A}$ that requires only $\sqrt{N}/10$ queries. Consequently, we can infer that the probability of $\mathcal{A}$ accepting a graph from $\mathcal{G}^{N}_2$ is at least $2/3$.
By referring to \cref{lemma:statistical_diff}, we determine that the statistical difference between $\mathbf{D}_1^{\mathcal{A}}$ and $\mathbf{D}_2^{\mathcal{A}}$ is at most $10\delta^2=1/10$ where $\delta$ is set $1/10$ for a $\sqrt{N}/10$-query algorithm. Hence, $\mathcal{A}$ accepts a graph distributed uniformly in $\mathcal{G}^{N}_1$ with a probability of at least $2/3 - 1/10 > 0.4$.

Furthermore, as indicated by \cref{proposition_G_1}, more than $99\%$ of the graphs in $\mathcal{G}^{N}_1$ are at least $0.01$-far from being clusterable. Consequently, by the definition of a clusterability tester, we can conclude that $\mathcal{A}$ accepts a graph distributed uniformly in $\mathcal{G}^{N}_1$ at most $0.99\cdot 1/3 + 0.01 < 0.35$. This contradicts the earlier finding that $\mathcal{A}$ accepts a graph distributed uniformly in $\mathcal{G}^{N}_1$ with a probability of at least $0.4$.
Hence, we can deduce that there does not exist a clusterability tester capable of distinguishing between a graph sampled from $\mathcal{G}^{N}_1$ and $\mathcal{G}^{N}_2$ using only $\sqrt{N}/10$ queries, and the theorem follows.
\end{proof}

\subsection{Quantum clusterability tester} \label{QCT}

\begin{algorithm} 
\caption{Quantum clusterability tester}\label{Algo:quantum_tester} 
\begin{algorithmic}[1] 
\Require  
Oracle access to a signed graph $G(V,E,\Sigma)$ with $N$ vertices and degree bound $d$; an accuracy parameter $\epsilon \in (0,1]$.  
\For{$O(1/\epsilon)$ times} 
\State Pick a vertex $s\in V$ randomly. 
\State Let $K=O\left(\sqrt{N} \operatorname{poly}(\log N / \epsilon ) \right)$, $L=\operatorname{poly}\left(\log N/\epsilon\right)$, $n=K L$, and $k=\Theta(L)$. 
\State Adopt \cref{proposition:random_variable} to construct $k$-wise independent random variables $b_{ij}$ taking values 
\State in~$[2d]$ for $i\in [K]$ and $j\in [L]$. 
  
\State Run the quantum collision finding algorithm in \cref{lemma:collision_finding} with the following setting:  
\begin{itemize}
    \item $X:= [K] \times [L]$; $Y$ is the set of tuples $(v,v_{\text{neb}})$ where $v\in V$ and $v_{\text{neb}}$ is the set of vertices adjacent to $v$. 
    \item A function $f$ that take $(i, j) \in X$ as input, and return the endpoint of a random walk that starts at $s$ with random coin flips $\left(b_{i 1}, \ldots, b_{i j}\right)$. 
    \item  Symmetric binary relation $R\subseteq Y\times Y$ defined as follows:
    $$
    \left((v, v_{\text{neb}}),(v^{\prime}, v_{\text{neb}}^{\prime})\right) \in R \text{ iff } (v\in v_{\text{neb}}^{\prime} \text{ and the edge between } v \text{ and } v' \text{ is negative).} 
    $$  
\end{itemize}
\If{quantum collision finding algorithm finds a collision} 
\State \Return false 
\EndIf 
\EndFor 
  
\State \Return true 
\end{algorithmic} 
\end{algorithm}

In this section, we present our second result: a quantum clusterability tester (\cref{Algo:quantum_tester}) with a query complexity of $O\left(N^{1/3} \operatorname{poly}\left(\log N/\epsilon\right)\right)$.
We begin by introducing the quantum clusterability tester, followed by the proof of its correctness in \cref{theorem_quantum_tester}.

\cref{Algo:quantum_tester} takes a signed graph $G(V,E,\Sigma)$ with $N$ vertices and a bound on the maximum degree $d$, along with an accuracy parameter $\epsilon \in (0,1]$, as input. The goal is to determine whether $G(V,E,\Sigma)$ is clusterable or $\epsilon$-far from clusterable. 
The algorithm consists of four major steps.

First, \cref{Algo:quantum_tester} randomly selects a vertex $s\in V$.
Second, it constructs random variables that determine the direction of movement in each step of these random walks.
To achieve this, we need to prepare $O(K\cdot L)$ random variables ($K$ and $L$ are defined in \cref{Algo:quantum_tester}); however, we can derandomize and reduce the number of random bits from $O(K\cdot L)$ to $O(L)$ because  \cref{Algo:quantum_tester} only depends on each pair of walks that are selected from $K$ random walks.
Therefore, it is sufficient to construct $k$-wise independent random variables $b_{ij}$ mapping to $[2d]$ for $i\in[K]$ and $j\in[L]$, where $k=\Theta(L)$.
This construction can be realized by the following proposition.

\begin{proposition}\label{proposition:random_variable} 
(\cite{alon1986fast}, Proposition 6.5) 
Let $n+1$ be a power of 2 and $k$ be an odd integer such that $k \leq n$. In this scenario, there exists a uniform probability space denoted as $\Omega=\{0,1\}^{m}$, where $m=1+\frac{1}{2}(k-1) \log_{2}(n+1)$. Within this space, there exist $k$-wise independent random variables, represented as $\xi_{1}, \ldots, \xi_{n}$ over $\Omega$, such that $\operatorname{Pr}\left[\xi_{j}=1\right]=\operatorname{Pr}\left[\xi_{j}=0\right]=\frac{1}{2}$.

Moreover, an algorithm exists that, when provided with $i \in \Omega$ and $1 \leq j \leq n$, can compute $\xi_{j}(i)$ in a computational time of $O(k \log n)$.
\end{proposition}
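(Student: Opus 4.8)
The strategy is to recover this from the dual-BCH-code construction that is, in essence, the content of \cite[Proposition~6.5]{alon1986fast}. Write $\ell = \log_2(n+1)$, so that $n = 2^{\ell}-1$, and put $t = (k-1)/2$ (an integer, since $k$ is odd), so that $m = 1 + t\ell$. I would realize $\Omega = \{0,1\}^m$ as the set of tuples $\omega = (b, x_1,\dots,x_t)$, with $b \in \{0,1\}$ and each $x_s$ an element of $\mathbb{F}_{2^{\ell}}$, after fixing an $\mathbb{F}_2$-basis so that $\mathbb{F}_{2^{\ell}} \cong \mathbb{F}_2^{\ell}$. Enumerating the nonzero elements of the field as $\beta_1,\dots,\beta_n$, I would then define
\begin{align*}
\xi_j(\omega) \,=\, b \,\oplus\, \bigoplus_{s=1}^{t} \langle \beta_j^{\,2s-1}, x_s \rangle ,
\end{align*}
where $\langle\cdot,\cdot\rangle$ denotes the standard $\mathbb{F}_2$-bilinear form on $\mathbb{F}_2^{\ell}$.

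Given this, I would verify the two asserted properties. The uniform marginal $\Pr[\xi_j=1]=\Pr[\xi_j=0]=1/2$ is immediate, since $\xi_j(\omega)$ is $b$ XORed with a quantity not depending on $b$ and $b$ is uniform. For $k$-wise independence I would invoke the standard criterion: $\{0,1\}$-valued variables with uniform marginals on a uniform space are $k$-wise independent iff, for every nonempty $S\subseteq[n]$ with $|S|\le k$, the parity $\bigoplus_{j\in S}\xi_j$ is an unbiased bit. Expanding,
\begin{align*}
\bigoplus_{j\in S}\xi_j(\omega) \,=\, (|S|\bmod 2)\, b \,\oplus\, \bigoplus_{s=1}^{t}\left\langle \sum_{j\in S}\beta_j^{\,2s-1},\, x_s\right\rangle .
\end{align*}
If $|S|$ is odd the coefficient of $b$ is $1$, so the parity is already unbiased over the choice of $b$ alone; if $|S|$ is even (so $|S|\le k-1$, since $k$ is odd), it is unbiased exactly when the ``syndrome'' $\big(\sum_{j\in S}\beta_j^{\,2s-1}\big)_{s=1}^{t}$ is not the zero vector.

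The heart of the matter — and the step where I expect the bookkeeping of the constants to be the only delicate point — is showing this syndrome is nonzero for every nonempty $S$ of even size at most $k-1$. Suppose instead that $\sum_{j\in S}\beta_j^{\,2s-1}=0$ for all $s=1,\dots,t$. In characteristic $2$ we have $\big(\sum_{j\in S}\beta_j^{\,i}\big)^2=\sum_{j\in S}\beta_j^{\,2i}$, so the set of exponents $i$ for which $\sum_{j\in S}\beta_j^{\,i}=0$ is closed under doubling modulo $n$; starting from the odd exponents $1,3,\dots,k-2$ (there are $t$ of them), this closure contains every $i\in\{1,\dots,k-1\}$, since each such $i$ equals $2^a$ times an odd integer at most $k-2$. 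Because $|S|$ is even we also have $\sum_{j\in S}\beta_j^{\,0}=|S|\equiv 0 \pmod 2$, so in fact $\sum_{j\in S}\beta_j^{\,i}=0$ for all $i\in\{0,1,\dots,k-1\}$, and in particular for $i=0,1,\dots,|S|-1$. That makes the all-ones vector a kernel element of the $|S|\times|S|$ Vandermonde matrix $\big(\beta_j^{\,i}\big)_{0\le i\le|S|-1,\ j\in S}$, whose nodes $\{\beta_j:j\in S\}$ are distinct — impossible. This is the classical BCH-bound argument; the point of the extra bit $b$ is that it absorbs every odd-$|S|$ case, which is precisely what lets the even case get by with only $t=(k-1)/2$ odd exponents rather than $(k+1)/2$, giving the stated $m=1+\tfrac12(k-1)\log_2(n+1)$.

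For the running-time claim, on input $(\omega,j)$ one recovers $\beta_j$ (the $j$-th nonzero element of $\mathbb{F}_{2^{\ell}}$ under the fixed identification), squares it once, forms $\beta_j^{1},\beta_j^{3},\dots,\beta_j^{2t-1}$ by $t-1$ successive multiplications by $\beta_j^{2}$, and accumulates the $t$ inner products together with $b$; each arithmetic operation in $\mathbb{F}_{2^{\ell}}$ and each inner product costs $O(\ell)=O(\log n)$ in a fixed representation, for a total of $O(t\log n)=O(k\log n)$, as claimed. This is, in substance, the proof of \cite[Proposition~6.5]{alon1986fast}.
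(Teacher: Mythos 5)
The paper does not prove this proposition at all---it is imported verbatim from \cite{alon1986fast} (Proposition 6.5)---so there is no in-paper argument to compare against; your reconstruction of the dual-BCH construction, including the Frobenius closure-under-doubling step and the Vandermonde/BCH-bound contradiction for even $|S|$, is the standard proof from that reference and is correct. The only point worth flagging is that the $O(k\log n)$ running time implicitly charges $O(\ell)$ per multiplication in $\mathbb{F}_{2^{\ell}}$, which presumes a word-RAM-style cost model; this matches the convention of the cited source and does not affect the paper's use of the proposition.
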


Third, we define a function $f$ that implements random walks according to these random variables. 
$f$ returns the endpoint of a random walk and the neighborhood of this endpoint. 
Specifically, we let $X= \{1, \ldots, K\} \times\{1, \ldots, L\}$, and $Y$ be the set of tuples $(v,v_{\text{neb}})$ where $v\in \{1, \ldots, N\}$ and $v_{\text{neb}}$ is the set of vertices adjacent to $v$.  
Then, we define the function $f$ as follows. 
$f$ takes $(i, j) \in X$ as input, then it runs a random walk according to the random variables $\left(b_{i 1}, \ldots, b_{i j}\right)$ such that (i) this walk starts at $s$ and (ii) each edge in this walk is positive.  
The function $f$ finally returns a tuple $(v,v_{\text{neb}})$.

Fourth, we define the symmetric binary relation $R\subseteq Y\times Y$ such that $\left((v, v_{\text{neb}}),(v', v_{\text{neb}}')\right) \in R$ iff (i) $v\in v_{\text{neb}}'$, and (ii) the edge between $v$ and $v'$ is negative.  
In other words, detecting a collision is equivalent to detecting a bad cycle. 
The last step is to detect two distinct elements $x_1, x_2 \in X$ such that $(f(x_1),f(x_2))$ satisfies the symmetric binary relation $R$.  
The collision finding problem can be improved by a quantum collision finding algorithm proposed by Ambainis \cite{ambainis2011quantum} as follow.

\begin{lemma} 
\label{lemma:collision_finding} 
(\cite{ambainis2011quantum}, Theorem 9) 
Given a function $f:X\rightarrow Y$, and a symmetric binary relation $R \subseteq Y \times Y$ which can be computed in $\operatorname{poly} (\log |Y|)$ time steps where $X$ and $Y$ are some finite sets, we denote a collision by a distinct pair $x, x^{\prime} \in X$ such that $\left(f(x), f\left(x^{\prime}\right)\right) \in R$.
There exists a quantum algorithm that can find a collision with a constant probability when a collision exists, and always returns false when there does not exist a collision. 
The running time of the quantum algorithm is $O\left(|X|^{2 / 3} \cdot \operatorname{poly} (\log |Y|)\right)$. 
\end{lemma}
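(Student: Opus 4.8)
The plan is to prove the lemma by reducing collision finding to a quantum walk search on the Johnson graph, following the quantum walk algorithm of Ambainis \cite{ambainis2011quantum}. I would work over the Johnson graph $J(|X|, r)$, whose vertices are the $r$-element subsets $S \subseteq X$ and whose edges join subsets differing in exactly one element, for a parameter $r$ fixed later. The walk's quantum state records, for the current subset $S$, the data $\{(x, f(x)) : x \in S\}$. I would declare a subset $S$ \emph{marked} if it contains a collision, that is, a distinct pair $x, x' \in S$ with $(f(x), f(x')) \in R$. Finding a collision then amounts to finding a marked vertex of $J(|X|,r)$ by quantum walk search.

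Next I would instantiate the quantum walk search cost formula, whose running time is $O\left(\mathsf{S} + \frac{1}{\sqrt{\epsilon}}\left(\frac{1}{\sqrt{\delta}}\,\mathsf{U} + \mathsf{C}\right)\right)$, where $\epsilon$ is the fraction of marked subsets, $\delta$ is the spectral gap of the walk, and $\mathsf{S}, \mathsf{U}, \mathsf{C}$ are the setup, update, and checking costs. For $J(|X|, r)$ one has $\delta = \Omega(1/r)$, and if a collision $(x_0, x_0')$ exists then a uniformly random $r$-subset contains both of its endpoints with probability $\epsilon = \Omega(r^2/|X|^2)$. The setup uses $\mathsf{S} = O(r)$ evaluations of $f$ plus $O(r \cdot \operatorname{poly}(\log|Y|))$ to build the auxiliary structure, and each update uses $\mathsf{U} = O(1)$ evaluations of $f$ plus the cost of maintaining that structure. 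Substituting these bounds gives running time $O\left(r + \frac{|X|}{r}\left(\sqrt{r} + \mathsf{C}\right)\right)$ up to $\operatorname{poly}(\log|Y|)$ factors; choosing $r = |X|^{2/3}$ balances the two dominant terms and yields the claimed $O(|X|^{2/3} \cdot \operatorname{poly}(\log|Y|))$, provided $\mathsf{C}$ is only poly-logarithmic.

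The main obstacle is precisely to make the checking cost $\mathsf{C}$ — and, relatedly, the bookkeeping folded into each update — poly-logarithmic rather than $\Omega(r)$: a naive test of whether $S$ contains an $R$-collision compares all $\binom{r}{2}$ pairs, which would destroy the speedup. To circumvent this I would carry, alongside the subset, a \emph{history-independent} searchable data structure (for instance a balanced search tree or a hash table keyed by the relevant coordinates of the stored values) together with a boolean flag recording whether $S$ currently contains a collision. History independence, meaning the structure depends only on the set $S$ and not on the order of insertions, is essential so that the walk acts on a well-defined uniform superposition and stays reversible. Each insertion or deletion then updates the flag by querying the structure only for the values $R$-related to the single element being added or removed; this is exactly where the hypothesis that $R$ is computable in $\operatorname{poly}(\log|Y|)$ time, combined with the searchability of the structure, is used, giving $\mathsf{C} = O(1)$ (reading the flag) with per-update maintenance in $\operatorname{poly}(\log|Y|)$. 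Finally I would note the one-sided error: the algorithm reports a collision only after actually reading two stored values satisfying $R$, so it never errs when no collision exists, and standard amplification boosts the success probability to a constant whenever a collision does exist.
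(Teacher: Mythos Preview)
The paper does not prove this lemma at all: it is quoted as Theorem~9 of \cite{ambainis2011quantum} and invoked as a black box in the analysis of Algorithm~\ref{Algo:quantum_tester}. There is therefore no in-paper argument to compare against. Your sketch is the standard quantum-walk-on-the-Johnson-graph proof underlying that result, and at the level of \emph{query} complexity (counting evaluations of $f$) the parameter choice $r=|X|^{2/3}$ and the resulting $O(|X|^{2/3})$ bound are correct.

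There is, however, a gap in your \emph{time}-complexity argument. You assert that the collision flag can be maintained in $\operatorname{poly}(\log|Y|)$ time per update by ``querying the structure only for the values $R$-related to the single element being added or removed.'' But the sole hypothesis on $R$ is that \emph{membership} $(y,y')\in R$ is decidable in $\operatorname{poly}(\log|Y|)$ time; nothing lets you enumerate, or index a search structure on, the set $\{y':(y,y')\in R\}$ for a given $y$. With only a membership test, updating the flag after inserting a new element $x$ forces you to evaluate $R(f(x),f(x'))$ against each of the $r$ stored elements, so the true update time is $\Theta(r\cdot\operatorname{poly}(\log|Y|))$, and plugging this into the MNRS formula degrades the bound to $\Omega(|X|)$ rather than $|X|^{2/3}$. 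In the specific instantiation used later in the paper the relation \emph{does} carry the needed extra structure: each $y=(v,v_{\mathrm{neb}})$ explicitly lists its $O(d)$ potential $R$-partners, so one can look those candidates up in a dictionary of stored vertices and recover a $\operatorname{poly}(\log|Y|)$ update. Some such structural assumption (efficient enumeration of $R$-neighbours, or an equivalent indexing condition) is what actually makes the running-time bound go through; your write-up should make it explicit rather than claiming it for arbitrary polylog-testable $R$.
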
 

By this lemma we can identify a bad cycle within $K$ random walks, with a query complexity of $O(|X|^{2/3})=O((K\cdot L)^{2/3})=O\left((\sqrt{N}\operatorname{poly}(\log N/\epsilon))^{2/3}\right)=O(N^{1/3}\operatorname{poly}(\log N/\epsilon)).$
Next, we establish the correctness of this algorithm and present its time complexity in the following theorem.
\begin{theorem} 
\label{theorem_quantum_tester}  
\cref{Algo:quantum_tester} is a quantum algorithm that tests the clusterability of a signed graph with query complexity and running time $O(N^{1/3}\operatorname{poly}(\log N/\epsilon))$.
\end{theorem}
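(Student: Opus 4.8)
The plan is to check the two conditions in the definition of a clusterability tester --- completeness on clusterable inputs and soundness on inputs $\epsilon$-far from clusterable --- and then to bound the query and time complexity by unwinding the cost of the call to \cref{lemma:collision_finding}. Throughout, I would fix one pass of the outer loop: a start vertex $s$ together with one realisation of the $\Theta(L)$-wise independent coins $b_{ij}$ drawn from the space $\Omega$ of \cref{proposition:random_variable}. Once these are fixed, $f$ is a genuine deterministic function $X\to Y$ that can be evaluated with $O(L+d)$ queries to the graph oracle (run the lazy positive walk of length at most $L$ determined by $(b_{i1},\dots,b_{ij})$, then read the neighbourhood of the endpoint), so \cref{lemma:collision_finding} applies verbatim. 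The structural observation I would state once and reuse is that a collision, i.e.\ a distinct pair $(i,j),(i',j')\in X$ with $(f(i,j),f(i',j'))\in R$, occurs exactly when the endpoints $v,v'$ of the two prefixes of positive walks from $s$ are joined by a negative edge; concatenating the positive $v$-to-$s$ and $s$-to-$v'$ walks with that negative edge gives a closed walk with exactly one negative edge, which contains a bad cycle, so any collision certifies that $G$ is not clusterable by Davis's cycle characterisation \cite{davis1967clustering}.

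For completeness, suppose $G$ is clusterable with clusters $C_1,\dots,C_m$. Every positive edge lies inside some $C_r$, so the connected component of the positive subgraph containing $s$ is contained in a single cluster, and hence $f(i,j)$ lies in that cluster for every $(i,j)$. Every negative edge joins two different clusters, so no negative edge joins two vertices in the cluster of $s$; thus no pair in $X$ satisfies $R$, there is no collision, and by \cref{lemma:collision_finding} the quantum subroutine reports "no collision" with certainty. The If-branch is never entered, so \cref{Algo:quantum_tester} outputs true with probability $1$.

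For soundness, suppose $G$ is $\epsilon$-far from clusterable. Here I would invoke the classical walk analysis of Adriaens and Apers \cite{adriaens2021testing} (itself following the Goldreich--Ron bipartiteness template \cite{goldreich1997property}): with the parameters $K=O(\sqrt N\,\operatorname{poly}(\log N/\epsilon))$ and $L=\operatorname{poly}(\log N/\epsilon)$, for a uniformly random start vertex $s$ the collection of $K$ length-$L$ lazy positive random walks from $s$ contains, with probability $\Omega(\epsilon)$, two walks whose endpoints are connected by a negative edge --- equivalently, by the observation above, a collision exists for $f$. I would also record that this $\Omega(\epsilon)$ bound is unaffected when full independence of the $b_{ij}$ is replaced by $\Theta(L)$-wise independence: the event "a collision exists" depends on the walks only through pairs of walks, each walk uses $L$ of the coins, so a second-moment (Paley--Zygmund) argument reproduces the bound from pairwise independence of the walks, exactly as in \cite{ambainis2011quantum}; this is what licenses the derandomisation of \cref{Algo:quantum_tester} to $O(L)$ random bits. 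Conditioned on a collision existing, \cref{lemma:collision_finding} finds one with constant probability, so one pass of the outer loop outputs false with probability $\Omega(\epsilon)$, and $O(1/\epsilon)$ passes amplify the rejection probability to at least $2/3$.

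For the complexity, I would apply \cref{lemma:collision_finding} with $|X|=KL$ and $Y$ the set of (endpoint, signed-neighbourhood) tuples, so $\log|Y|=O(d\log N)=\operatorname{poly}(\log N)$ and $R$ is decidable in $\operatorname{poly}(\log|Y|)$ time with $O(1)$ extra queries. The lemma performs $O(|X|^{2/3}\operatorname{poly}(\log|Y|))$ evaluations of $f$ and $R$; each $f$-evaluation costs $O(L+d)$ queries plus $O(L\cdot k\log n)=\operatorname{poly}(\log N/\epsilon)$ time to generate the needed $b_{ij}$ by \cref{proposition:random_variable}. Hence one pass uses $O\big((KL)^{2/3}\cdot\operatorname{poly}(\log N/\epsilon)\big)=O\big(N^{1/3}\operatorname{poly}(\log N/\epsilon)\big)$ queries and time, and the $O(1/\epsilon)$ repetitions leave this bound unchanged. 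The main obstacle is the soundness step: the quantum ingredient is a black-box application of \cref{lemma:collision_finding}, but porting the Goldreich--Ron/Adriaens--Apers walk analysis to signed graphs --- proving that $\epsilon$-farness from clusterability forces, with probability $\Omega(\epsilon)$ over a random seed, a bad cycle among polynomially many bounded-length positive walks, and that this is robust to $\Theta(L)$-wise independence of the coins --- is where the real work lies.
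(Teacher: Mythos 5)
Your proposal is correct and follows essentially the same route as the paper: completeness because any collision certifies a bad cycle (so a clusterable graph yields none), soundness by importing the random-walk analysis of Adriaens and Apers (the paper cites their Claim 14 for exactly this step), and the complexity bound by combining \cref{lemma:collision_finding} with the cost of evaluating $f$ and the $k$-wise independent coins. You are in fact somewhat more explicit than the paper's own proof --- e.g., on why a collision implies a bad cycle, on the robustness of the soundness bound under $\Theta(L)$-wise independence, and on the per-evaluation query cost of $f$ --- but these are elaborations of the same argument rather than a different approach.
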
 
Following our first result, we conclude that our quantum clusterability tester outperforms any classical clusterability tester.  

\begin{proof}

First we prove that \cref{Algo:quantum_tester} is indeed a clusterability tester.
When $G$ is clusterable, signifying the absence of one bad cycle, \cref{Algo:quantum_tester} fails to discover a collision. Consequently, it returns true.
On the contrary, when $G$ is $\epsilon$-far from clusterable, the assertion in Claim 14 from \cite{adriaens2021testing} suggests that the algorithm can pinpoint a bad cycle within the sampled random walks with a constant probability. 
This leads \cref{Algo:quantum_tester} to return false with a constant probability.

To bound the time complexity (and hence query complexity), we need to bound the following quantities:
\begin{itemize}
    \item The time required to evaluate the $k$-wise independent random variables.
    \item The time required to evaluate $f$.
    \item The number of queries required to find a collision.  
\end{itemize}

For the first requirement, it takes $O\left(\operatorname{poly} (\log N/\epsilon)\right)$ time to evaluate a $k$-wise independent random variable, as indicated by \cref{proposition:random_variable}.
Moving to the second requirement, it is evident that each evaluation of $f$ consumes time $\operatorname{poly}(\log N/\epsilon)$ since $f$ is a procedure implementing a random walk, and the length of the walk is $L\in\operatorname{poly}(\log N/\epsilon)$.
Concerning the last requirement, we are aware that detecting a collision requires $O\left(N^{1 / 3} \operatorname{poly}\left(\log N/\epsilon\right)\right)$ time, as derived from \cref{lemma:collision_finding}.
In conclusion, the query and time complexity of \cref{Algo:quantum_tester} is $O\left(N^{1 / 3} \operatorname{poly}\left(\log N/\epsilon\right)\right)$.  
\end{proof}

\section{Proof details}
In this section, we detail the construction and lemmas in \cref{Theorem:lower_bound}.
In \cref{sec:4_1}, we generate two distinct families of graphs, each exhibiting different property of clusterability.
In \cref{sec:4_2}, we introduce two random processes that interact with an arbitrary algorithm $\mathcal{A}$ during the generation of graphs selected uniformly from the aforementioned families.
In \cref{sec:4_3}, we demonstrate that the statistical difference of query answer histories produced by $\mathcal{A}$ and these two random processes is bounded by the number of queries.

\subsection{Graph construction}\label{sec:4_1}

Here we detail the construction and analysis of the graph families $\mathcal{G}_1^N$ and $\mathcal{G}_2^N$.
 
\subsubsection{Construction of two families of signed graphs $\cG^N_{\alpha}$.}

We detail the construction of two families of signed graphs denoted as $\mathcal{G}^N_{1}$ and $\mathcal{G}^N_{2}$. In both families, each signed graph consists of $N$ vertices, where $N$ is a multiple of 10. Each vertex $v$ is assigned a label $p_v$ chosen from the set $\{0, 1, \ldots, 9\}$ in such a way that there are exactly $N/10$ vertices for each possible label.

For the edge set, we embed them in a manner such that each vertex is incident to precisely $6$ edges.
We construct edge sets based on cycles associated to a permutation $\sigma=(r_1\ r_2\ \ldots \ r_L)$, where $0 \leq L \leq 9$ and $r_l \in \{0, 1, \ldots, 9\}$ are distinct for $1\leq l\leq L$,
With some abuse of notation, we also denote by $\sigma$ the bijective function $\sigma:\{r_1, r_2, \ldots, r_L\} \rightarrow \{r_1, r_2, \ldots, r_L\}$ defined as
\begin{align*}
\sigma(r_l)=
\begin{cases}
r_{l+1}   & \text { if } l<L. \\
r_{1}   & \text { if } l=L. \\
\end{cases}
\end{align*}
With this notation, we define a family $\mathcal{D}^{\sigma}$ such that each member of this family is a union of cycles satisfying two properties: (i) the union of cycles contains all vertices in $[N]$ labeled with values from $r_0$ to $r_L$, and (ii) for each cycle $(v_1, v_2, \ldots, v_J)$ in the union of cycles, the label for each vertex must satisfy $p_{v_{j+1}} = \sigma(p_{v_j})$ for $1 \leq j \leq J$ (where we set $v_{J+1}=v_1$).
We then employ these cycles to define the edge sets for the graphs in the family $\mathcal{G}^N_{\alpha}$. 
See \cref{Fiproposition_G_1} for an illustration for $\mathcal{G}^{40}_{1}$.

\begin{description}
  \item [For $\cG^N_{1}$:]

Each graph in $\mathcal{G}^N_1$ consists of one Hamiltonian cycle and two unions of cycles (we later comment on the particular choice of $\sigma$'s):
\begin{itemize}
    \item 
    The Hamiltonian cycle $\in \mathcal{D}^{\sigma^{\text{1st}}}$ with $\sigma^{\text{1st}} = (0\ 1\ 2\ 3\ 4\ 5\ 6\ 7\ 8\ 9)$.
    We call this the arc cycle.
    All of its edges are positive, and we refer to these edges as arc edges.
    \item
    One union of cycles $\in \mathcal{D}^{\sigma^{\text{2nd}}}$ with $\sigma^{\text{2nd}} = (2\ 4\ 6\ 0\ 8\ 1\ 3\ 7\ 9\ 5)$,\footnote{
The choice of $\sigma^{\text{2nd}}$ and $\sigma^{\text{3rd}}$ is not unique; we only require that these edge sets are disjoint when we fix the label of each vertex. This forbids picking for example $\sigma^{\text{2nd}}=(0\ 2\ 4\ 6\ 8\ 1\ 3\ 5\ 7\ 9)$, since the edges connecting vertices labeled $9$ and $0$ can be found in both $\mathcal{D}^{\sigma^{\text{1st}}}$ and $\mathcal{D}^{\sigma^{\text{2nd}}}$, meaning they are not disjoint. However, we could replace $\sigma^{\text{2nd}}$ with $(2\ 6\ 4\ 0\ 8\ 1\ 3\ 7\ 9\ 5)$, where exchanging $6$ and $4$ would not violate the disjoint property.} with each of its edges being positively signed. We call these edges connecting edges.
    \item
    A second union of cycles $\in \mathcal{D}^{\sigma^{\text{3rd}}}$ with $\sigma^{\text{3rd}} = (1\ 6\ 3\ 8\ 5\ 0\ 7\ 2\ 9\ 4)$, and all edges are negatively signed.
\end{itemize}

  \item [For $\cG_{2}^{N}$:]
In the family of graphs $\mathcal{G}_2^N$, each graph contains one Hamiltonian cycle and $12$ unions of cycles:
\begin{itemize}
    \item 
    The Hamiltonian cycle $\in \mathcal{D}^{\sigma^{\text{1st}}}$ with each of its edges negatively signed. 
    \item 
    There are ten additional unions of cycles $\in \mathcal{D}^{\sigma^{s}}$ with $\sigma^{s} = (s)$ for $s$ taking values in the set $\{0, 1, \ldots, 9\}$. These edges are positive.
    \item 
    The last two unions of cycles are disjoint. One belongs to $\mathcal{D}^{\sigma^{10}}$ with $\sigma^{10} = (0\ 2\ 4\ 6\ 8)$. The other belongs to $\mathcal{D}^{\sigma^{11}}$ with $\sigma^{11} = (1\ 3\ 5\ 7\ 9)$.
    These edges are positive.
\end{itemize}
\end{description}

\begin{figure}[ht]
\centering
\includegraphics[width=13cm]{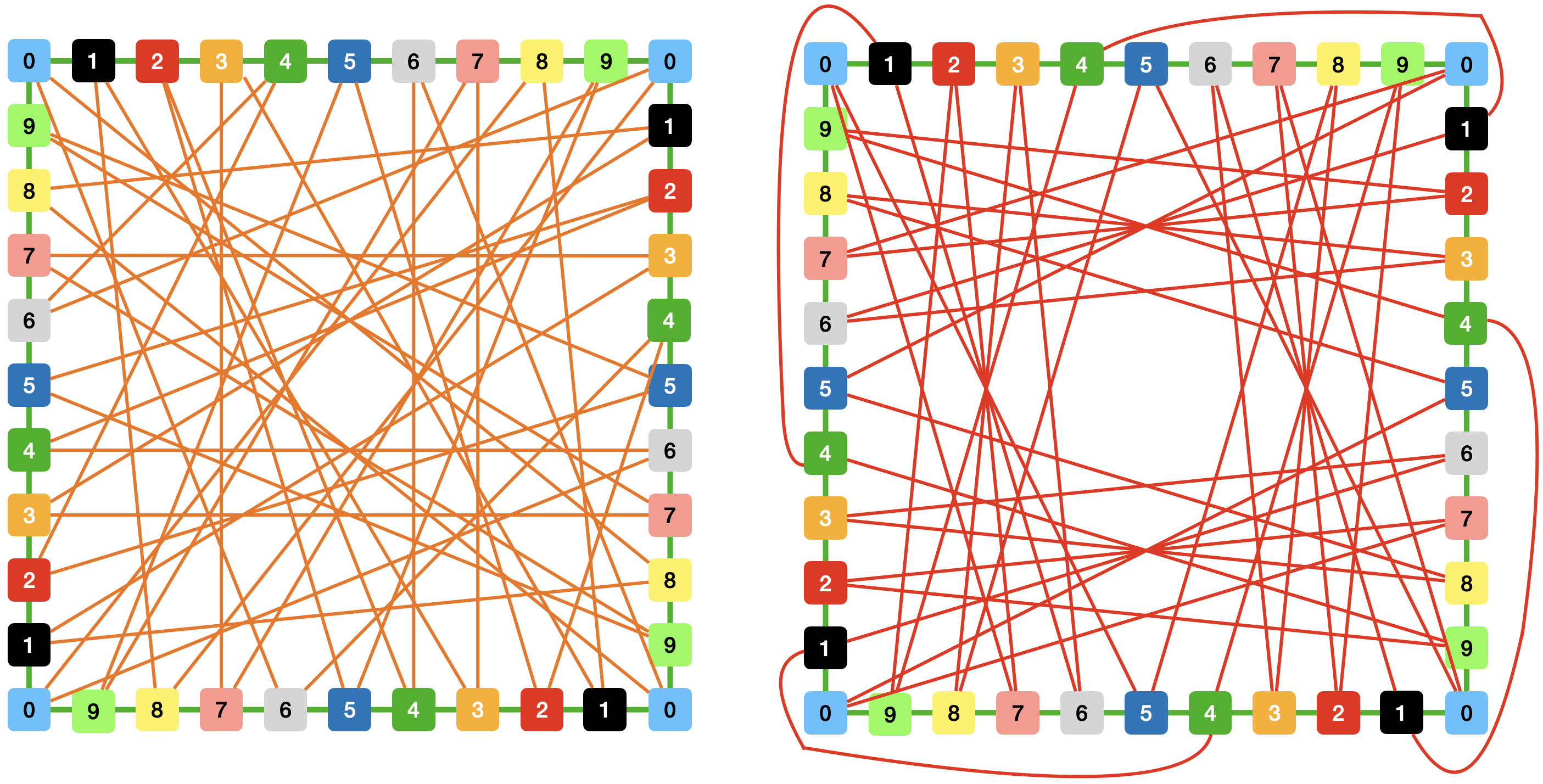}
\caption{
\label{Fiproposition_G_1}
An instance of $\cG_{1}^{40}$. The green lines indicate the edges in one Hamiltonian cycle belonging to $\mathcal{D}^{\sigma^{\text{1st}}}$, the orange lines indicate the edges in one union of cycles belonging to $\mathcal{D}^{\sigma^{\text{2nd}}}$, and the red lines indicate the edges in one union of cycles belonging to $\mathcal{D}^{\sigma^{\text{3rd}}}$.
 }
\end{figure}

In every graph within $\mathcal{G}^N_{\alpha}$, each vertex is incident to precisely six edges, and these edges are labeled according to the following convention:
For a pair of adjacent vertices, represented as $v_j$ and $v_{j+1}$ for $1\leq j\leq J-1$, within any cycle $(v_1, v_2, \ldots, v_J)$, we label the edge connecting them as $k$ for $v_m$ and as $k+1$ for $v_{m+1}$, for some $k\in \mathbb{N}$. This labeling effectively associates an orientation to the cycle.
More specifically, in a graph from $\mathcal{G}^N_1$:
\begin{itemize}
\item 
The edges in the Hamiltonian cycle from $\mathcal{D}^{\sigma^{\text{1st}}}$ are labeled with $1$ and $2$.
\item
For the edges in the union of cycles $\in \mathcal{D}^{\sigma^{\text{2nd}}}$, we use labels $3$ and $4$.    
\item
For the edges in the union of cycles $\in \mathcal{D}^{\sigma^{\text{3rd}}}$, we use labels $5$ and $6$.
\end{itemize}

In the case of a graph from $\mathcal{G}^N_2$:
\begin{itemize}
\item
The edges in the Hamiltonian cycle corresponding to $\mathcal{D}^{\sigma^{\text{1st}}}$ are labeled as $5$ and $6$.
\item
For the edges in the union of cycles within $\mathcal{D}^{\sigma^{s}}$ for $s$ ranging from $0$ to $9$, we assign labels $1$ and $2$.
\item
For the edges in the union of cycles $\in\mathcal{D}^{\sigma^{10}}$ and $\mathcal{D}^{\sigma^{11}}$, we label them with $3$ and $4$.
\end{itemize}

\subsubsection{Clusterability of $\cG_{\alpha}^N$}
We initially observe that all graphs within $\cG_2^N$ are clusterable  with the following rationale. 
All vertices with even (odd, respectively) labeling are interconnected via positive edges. Consequently, two connected components emerge: one component comprises all vertices with even labeling, and the other includes all vertices with odd labeling, each with positive edges. We further note that these two components can only be connected through negative edges. 
Hence, any graph within $\mathcal{G}_{2}^{N}$ satisfies the clusterability definition. As a result, all graphs in the second family are clusterable.

Regarding the graphs in $\mathcal{G}_{1}^{N}$, we will demonstrate that they are at least $0.01$-far from being clusterable with probability at least $1- \exp (-\Omega (N))$ in the following \cref{proposition_G_1}.

\begin{proposition} \label{proposition_G_1} 
The graphs in $\mathcal{G}_{1}^{N}$ are $0.01$-far from clusterable with probability at least $1- \exp (-\Omega (N))$.
\end{proposition}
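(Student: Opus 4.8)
The plan is to show that with overwhelming probability a random graph $g \in \mathcal{G}_1^N$ has the property that \emph{every} way of deleting at most $0.01 \cdot N d = 0.06 N$ edges still leaves a bad cycle (a cycle with exactly one negative edge), so that $g$ is $0.01$-far from clusterable. Equivalently, I want to produce a large collection of bad cycles in $g$ that are sufficiently edge-disjoint that no set of $0.06N$ edge deletions can destroy all of them. First I would isolate a natural source of short bad cycles. Consider a negative edge $(u,u')$ from $\mathcal{D}^{\sigma^{\text{3rd}}}$, whose endpoints have labels $p_u = r$ and $p_{u'} = \sigma^{\text{3rd}}(r)$; I want to find a positive path from $u$ to $u'$ using only arc edges (from $\mathcal{D}^{\sigma^{\text{1st}}}$) and connecting edges (from $\mathcal{D}^{\sigma^{\text{2nd}}}$). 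Because the arc permutation $\sigma^{\text{1st}}$ is a single $10$-cycle and $\sigma^{\text{2nd}}$ is also a single $10$-cycle on the labels, the subgroup of $S_{10}$ they generate acts transitively on labels, so at the level of labels there is always a word in $\sigma^{\text{1st}}, \sigma^{\text{2nd}}$ taking $r$ to $\sigma^{\text{3rd}}(r)$ of bounded length (at most a constant $c \le 18$, since we only have $10$ labels). This gives, for each negative edge, a bad cycle of constant length $O(1)$.

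The second step is a counting/expectation argument at the level of the random embedding. Fixing the label pattern, the actual graph is determined by uniformly random perfect matchings realizing each $\mathcal{D}^{\sigma}$; there are $\Theta(N)$ negative edges, and for each I get a candidate bad cycle of bounded length. I would then argue, via a union bound or a second-moment/Chernoff-type estimate over the random matchings, that a constant fraction $\Omega(N)$ of these bad cycles are ``short'' (length $\le c$) and, crucially, that one can extract a sub-collection of $\Omega(N)$ bad cycles that are pairwise edge-disjoint — for instance by a greedy argument: each short bad cycle touches $\le c$ edges, so a maximal edge-disjoint subfamily has size $\ge (\text{number of short bad cycles})/(c \cdot \max\text{overlap})= \Omega(N)$. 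To make ``far'' quantitative, I need this edge-disjoint family to have size strictly larger than $0.06N$; this is where the exact constants ($\epsilon = 0.01$, degree $6$, the lengths coming from the specific $\sigma$'s) must be checked, and it is the step most likely to require care — possibly forcing a slightly more clever choice of bad cycles (e.g.\ using that deleting a positive edge of a bad cycle that also lies on the Hamiltonian arc cycle can be ``rerouted'') rather than the crude greedy bound. The $\exp(-\Omega(N))$ failure probability then comes from a Chernoff bound: the number of short edge-disjoint bad cycles is a sum of (nearly) independent bounded indicator contributions over the $\Omega(N)$ negative edges, so it concentrates around its $\Omega(N)$ mean, and the bad event (fewer than $0.06N + 1$ of them) has exponentially small probability.

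I expect the main obstacle to be precisely the disjointness-versus-constants bookkeeping: it is easy to produce $\Omega(N)$ bad cycles, but to certify $0.01$-farness one needs the edge-disjoint (or more generally, ``fractionally disjoint'') family to robustly exceed $\epsilon N d = 0.06 N$ edges, and the bad cycles built from short label-words may share arc edges heavily. A clean way around this, which I would pursue, is to use an LP-duality / fractional-packing view: it suffices to exhibit a fractional matching of bad cycles (a weight assignment with total weight $> 0.06N$ such that every edge carries total weight $\le 1$), and then concentration of this fractional quantity under the random matchings gives the result. Alternatively — and this may be the intended route given the ``bad cycle = one negative edge'' characterization — one shows directly that making $g$ clusterable requires, in each of the $N/10$-sized label classes, flipping a number of signs proportional to $N$, because the three permutations $\sigma^{\text{1st}},\sigma^{\text{2nd}},\sigma^{\text{3rd}}$ were chosen with disjoint edge sets and "incompatible" cycle structure, so no consistent $2$-coloring of components can avoid $\Omega(N)$ violated edges; I would check which of these two framings yields the cleanest $\exp(-\Omega(N))$ bound and present that one.
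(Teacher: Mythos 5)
There is a genuine gap in your main route, and it is fatal rather than a matter of bookkeeping. Your constant-length bad cycles do not exist in abundance. The word-length argument in the label group $\langle\sigma^{\text{1st}},\sigma^{\text{2nd}}\rangle$ only guarantees a short positive walk from $u$ to \emph{some} vertex carrying the label $\sigma^{\text{3rd}}(p_u)$; it does not steer the walk to the specific endpoint $u'$ of the negative edge. Under the uniformly random matchings realizing each $\mathcal{D}^{\sigma}$, the endpoint of a constant-length positive walk from $u$ is essentially uniform over the $N/10$ vertices of the target label, so each negative edge closes into a bad cycle of length $\le c$ with probability $O(1/N)$, and the expected number of short bad cycles in the whole graph is $O(1)$, not $\Omega(N)$. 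This is not an accident: if $\mathcal{G}_1^N$ typically contained $\Omega(N)$ constant-length bad cycles, a constant-query algorithm that samples a vertex and explores its constant-radius neighborhood would detect non-clusterability, contradicting the very $\Omega(\sqrt{N})$ lower bound this family is built to prove. So no amount of greedy or fractional packing of \emph{short} cycles can be made to work; the witnesses to non-clusterability here are necessarily long cycles.

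The paper's proof takes the route you only gesture at in your last sentences. It shows that the positive subgraph (the Hamiltonian arc cycle together with the connecting edges from $\mathcal{D}^{\sigma^{\text{2nd}}}$) remains connected even after an adversarial deletion of any $0.06N$ positive edges, with probability $1-\exp(-\Omega(N))$. The argument is a component-merging one: deleting $x<0.06N$ arc edges leaves $x$ arcs, and each connecting edge added by the random process merges the currently smallest component with the rest with probability greater than $1/2$, so the probability of ending with more than one positive component is bounded by a binomial tail $\sum_i \binom{N}{i}2^{-(N-i)}$, which is $\exp(-\Omega(N))$ even after a union bound over the $\binom{N}{x}$ choices of deleted arc edges. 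Given robust positive connectivity, every one of the $N$ negative edges lies on a bad cycle (its endpoints are joined by a positive path), so all $N$ of them would have to be removed, far exceeding the $0.06N$ budget. If you want to salvage your write-up, replace the short-cycle packing with this connectivity-under-deletion argument; the Chernoff-type concentration you had in mind is exactly what is needed there, just applied to component merges rather than to cycle counts.
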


\begin{proof}

We commence our proof by providing a description of the random process used to uniformly generate a graph denoted as $g$ from $\mathcal{G}_{1}^{N}$. We begin by constructing the set of vertices~$[N]$ and refer to the resulting (empty) graph as $sg$. 
The graph $sg$ is equipped with its edges set through a three-step process:

\begin{enumerate}
    \item \textbf{(One Hamiltonian cycle):}
    In the first step, we uniformly select a Hamiltonian cycle from all possible Hamiltonian cycles on the vertices set $[N]$ and assign each vertex a label from the set $\{0,1, \ldots, 9\}$, based on the rule of cycles $\in\mathcal{D}^{\sigma^{\text{1st}}}$. All edges constructed in this step are positive.
    \item 
    \textbf{(Second edge set from }$\mathcal{D}^{\sigma^{\text{2nd}}}$\textbf{):} 
    In the second step, we repeat the following processes $N$ times:
    \begin{enumerate}
        \item 
        Select an arbitrary vertex $u_i$ that lacks an edge labeled as $3$ where the index $i\in[N]$ represents the label of iteration.
        \item 
        Uniformly select a vertex $v_i$ from a set that includes all vertices labeled as $\sigma^{\text{2nd}}(p_{u_i})$ and that lack an edge labeled as $4$.
        \item 
        Add the edge $(u_i,v_i)$.
    \end{enumerate}
    This adds an edge set from $\mathcal{D}^{\sigma^{\text{2nd}}}$.
    We make these edges positive.
    \item 
    \textbf{(Third edge set from }$\mathcal{D}^{\sigma^{\text{3rd}}}$\textbf{):}
    Similar to the previous procedure, we add an edge set from $\mathcal{D}^{\sigma^{\text{3rd}}}$.
    We make these edges negative.
\end{enumerate}

We call the resulting graph $g$, and note that $g$ is a uniformly random element from $\mathcal{G}^{N}_1$.
We proceed to observe that each graph $g$ in $\mathcal{G}_{1}^{N}$ is inherently non-clusterable. Indeed, unless we remove arc edges from the Hamiltonian cycle, every negative edge of a cycle in $\mathcal{D}^{\sigma^{\text{3rd}}}$ contributes to a bad cycle.
We show that, with high probability over the random graph in $\mathcal{G}^N_1$, removing less than $0.01 d N = 0.06N$ edges cannot make the graph clusterable.

More precisely, we will establish that after removing less than $0.06N$ arc edges, with high probability, all vertices remain connected through (positive) connecting edges in $\mathcal{D}^{\sigma^{\text{2nd}}}$, and so the graph cannot be clustered.
To prove this, let us delve into a more detailed description of the random process used to generate a graph $g$.

In the first step, we construct a Hamiltonian cycle and eliminate $x < 0.06N$ arc edges, resulting in a graph with $x$ components. 
There are $C^{N}_{x} = \binom{N}{x}$ possible possibilities for these $x$ components.

During the first iteration of the second step, we select the arbitrary vertex $u_0$ from the component with the fewest vertices and designate this component as $C$. 
It becomes evident that, in the first iteration of step 2(c), the edge $(u_0, v_0)$ connects component $C$ to another component, with a probability exceeding $1/2$. 
Consequently, the number of components in the graph $sg$ decreases by $1$, and the number of vertices in $C$ increases with a probability greater than $1/2$.

In the subsequent iterations, we select the vertex $u_i$ for $2 \leq i \leq N$ based on the following rule: 
If the number of vertices labeled as $\sigma^{\text{2nd}}(p_{v_{i-1}})$ and lacking edges labeled as $4$ within the component $C$ is fewer than the number of vertices labeled as $\sigma^{\text{2nd}}(p_{v_{i-1}})$ not in $C$, then we set the vertex $u_{i}$ equal to $v_{i-1}$. Subsequently, in 2(b) and 2(c), the process embeds an edge connecting $u_{i}$ to some vertex $v_{i}$ that is not a resident in $C$ with a probability greater than $1/2$.
Otherwise, we choose $u_{i}$ from any arbitrary vertex labeled as $\sigma^{\text{2nd}}(p_{v_{i-1}})$ and not in $C$. Subsequently, in 2(b), the process selects $v_{i}$ in $C$ with a probability greater than $1/2$, as $C$ has more vertices capable of connecting with $u_{i}$ than the set of vertices not in $C$.

Consequently, the probability of the graph having more than one component can be bounded by the probability of obtaining fewer than $x$ heads when flipping $N$ unbiased coins. This probability can be bounded as follows:

\begin{align*}
\sum^{x}_{i=2}
C^{N}_{i}
\left(\frac{1}{2}\right)^{N-i} 
<
2^{N \cdot H(0.06)}
2^{-N}
2^{x}
<
2^{N(-1+0.06+H(0.06))}
,
\end{align*}
where $H(p)=-p\log(p)-(1-p)\log(1-p)$ is the (binary) entropy function. 

At this point we removed only $x<0.06N$ edges and we are permitted to remove an additional $0.06N - x$ connecting edges from $sg$.
This corresponds to $0.06N - x$ tests where the coin flips tails (thus reducing the number of components by $1$) can be taken into account for flips resulting in heads.
In other words, the condition of having fewer than $x$ heads can be extended to having fewer than $x+0.06N - x$ heads when flipping $N$ unbiased coins.
Consequently, the probability that the resulting graph has more than one component, when $0.06N - x$ connecting edges are removed, can be bounded as:

\begin{align*}
\sum^{x+(0.06N-x)}_{i=2}
C^{N}_{i}
\left(\frac{1}{2}\right)^{N-i} 
<
2^{N(-1+0.06+H(0.06))}.
\end{align*}
Given that there are $C^{N}_{x} < 2^{NH(0.06)}$ possible ways to construct $x$ components in the first step, we can confidently assert that, after implementing step (2), all vertices in $sg$ are interconnected by positive edges with a probability of at least $1 - \exp^{-\Omega(N)}$, even in cases where $0.06N$ positive edges (comprising of $x$ arc edges and $0.06N-x$ connecting edges) were removed.
The negative edges are present in the edge set in $\mathcal{D}^{\sigma^{\text{3rd}}}$, and each of them generates a bad cycle under the condition that only one component (only positive edges inside) is left after completing the second step in the process. In other words, under this condition, we must remove all negative edges to make this graph clusterable.
Consequently, the lemma follows.
\end{proof} 

\subsection{Random processes}
\label{sec:4_2}

Here we construct and analyze the random processes that play a key role in our lower bound.
The first part describes the interaction of a random process $P_{\alpha}$ with an algorithm $\mathcal{A}$.
The second part proves that $P_{\alpha}$ indeed generates a graph uniformly within $\mathcal{G}^{N}_{\alpha}$, as further elucidated in  \cref{proposition:random_process}.

We will begin by defining the random process $P_1$, which involves two stages. The first stage will explain how $P_1$ interacts with an arbitrary $T$-query algorithm $\mathcal{A}$. The second stage will elaborate on how $P_1$ constructs a graph uniformly sampled from  $\mathcal{G}_{1}^{N}$.

\begin{description} 
\item[First stage of \boldmath$P_1$:] 
Given a query-answer history $h=[(q_1, a_1), (q_2, a_2), \ldots, (q_{t-1}, a_{t-1})]$ for $t\leq T$, we define a set of vertices $X_{p,i}$, which contains all vertices labeled with $p$ in the history and lacking edge $i$. We also use the notation $n_p$ to represent the count of vertices in this history that are labeled $p$.
For each query $q_{t}=(v_{t}, i_{t})$ made by $\mathcal{A}$, the actions of $P_1$ are defined as follows:
\begin{enumerate} 
 \item  
 If $v_{t}$ is not in $h$, then $P_1$ labels $v_{t}$ with a number $p\in \{0,1,\cdots,9\}$ with a probability of $\frac{(N/10)-n_{p}}{N-\left(\sum_{p=0}^{9} n_{p}\right)}$. Subsequently, $P_1$ answers this query as described in (2) below.
 
 \item If $v_{t}$ belongs to $h$, there are two possible scenarios:
   
\begin{enumerate} 
  
\item 
If we can find the edge corresponding to $q_{t}$ in $h$, then $P_1$ responds with the vertex connected to this edge. In other words, there exists an edge $(v_t,u)$ in $h$ such that $(v_t,u)$ is labeled as $i_t$ for vertex $v_t$, and $P_1$ responds with $u$. The query-answer history remains unchanged in this case.

\item 
If the edge corresponding to $q_{t}=(v_t,i_t)$ does not exist in $h$, we follow these steps:
Suppose, without loss of generality, that $i_t=1$. We set the label $\sigma^{\text{1st}}(p_{v_t})$ as $\overline{p}$ and $\overline{i}=i_t+1=2$. 
$P_1$ decides whether to uniformly select a vertex from $X_{\overline{p},\overline{i}}$ by flipping a coin with bias $\frac{|X_{\overline{p}, \overline{i}}|}{N/10\ -\ n_{\overline{p}}\ +\ |X_{\overline{p}, \overline{i}}|}$ or to uniformly select a vertex not present in $h$, and assigns the label $\overline{p}$ to it. In either case, $P_1$ responds with the selected vertex $u$, and the edge $(v_t,u)$ is signed positively. Subsequently, this edge $(v_t,u)$ is added to the query-answer history $h$.

For the other case ($i_t=2,3,4,5,6$), $P_1$ acts in a similar manner as described above, except for the assignment for $\overline{p}$, the assignment for $\overline{i}$, and the sign of the added edge. The added edge is positively signed for $i_t=2,3,4$, and negatively signed for $i_t=5,6$. For $\overline{i}$, it is set to $i_t+1$ for $i_t=3,5$ and to $i_t-1$ for $i_t=2,4,6$. The assignment for $\overline{p}$ is as follows:

$$
\begin{cases}
\overline{p}\leftarrow(\sigma^{\text{1st}})^{-1}(p_{v_t}) 
& \text{for } i_t=2   \\ 
\overline{p}\leftarrow \sigma^{\text{2nd}}(p_{v_t})
& \text{for } i_t=3   \\ 
\overline{p}\leftarrow(\sigma^{\text{2nd}})^{-1}(p_{v_t})
& \text{for } i_t=4   \\ \overline{p}\leftarrow\sigma^{\text{3rd}}(p_{v_t}) & \text{for } i_t=5   \\ \overline{p}\leftarrow(\sigma^{\text{3rd}})^{-1}(p_{v_t}) & \text{for } i_t=6   \\ 
\end{cases}
$$

\end{enumerate} 

\end{enumerate} 
  
\item [First stage of \boldmath$P_2$:] 
$P_2$ follows a similar process to $P_1$, with the only differences being the assignment for $\overline{p}$ as follows:
$$
\begin{cases}
\overline{p}\leftarrow p_{v_t} & \text{for } i_t=1   \\ 
\overline{p}\leftarrow p_{v_t} & \text{for } i_t=2   \\ 
\overline{p}\leftarrow p_{v_t}+2 \ (\text{mod } 10)
& \text{for } i_t=3   \\ 
\overline{p}\leftarrow p_{v_t}-2 \ (\text{mod } 10)& \text{for } i_t=4   \\ 
\overline{p}\leftarrow \sigma^{\text{1st}}(p_{v_t})& \text{for } i_t=5  \\
\overline{p}\leftarrow(\sigma^{\text{1st}})^{-1}(p_{v_t})& \text{for } i_t=6   \\ 
\end{cases}
$$

\item [Second stage of \boldmath$P_1$:] 

After answering all of these queries and generating a query-answer history $\left[\left(q_{1}, a_{1}\right), \ldots,\left(q_{T}, a_{T}\right)\right]$, $P_1$ proceeds with the following processes:

\begin{enumerate} 
 \item
 Uniformly selecting a feasible way to embed the edges in $h$ on a cycle. The embedding of these edges adhere to the following conditions:
 Each vertex is assigned a cycle position, i.e., an integer in $\{0, \ldots, N-1\}$, in a manner that ensures each vertex labeled with $p\in\{0,\ldots,9\}$ is positioned at a position $x$ such that $p\equiv x$ (mod $10$).
 This assignment implies that all acr edges (labeled $1,2$) are placed on the cycle, and edges labeled $3,4,5,6$ are excluded from the cycle.
 \item Randomly positioning all other vertices on the cycle, ensuring that each vertex $v$ with label $p_{v}$ is positioned at a position $x$ such that $p_{v}\equiv x$ (mod $10$). Subsequently, all cycle edges are assigned a positive sign.
 \item In the end, uniformly selecting a feasible way to embed the edges sets in 
 $\mathcal{D}^{\sigma^{\text{2nd}}}$ and $\mathcal{D}^{\sigma^{\text{3rd}}}$. All edges in the edges sets 
 $\in\mathcal{D}^{\sigma^{\text{2nd}}}$ assigned a positive sign, while all edges in the edges sets 
 $\in\mathcal{D}^{\sigma^{\text{3rd}}}$  are assigned a negative sign.
 \end{enumerate} 

\item [Second stage of \boldmath$P_2$:] 
$P_2$ follows a process similar to that of $P_1$, with few distinctions:
In (2), we assign each cycle edge a negative sign. 
In (3), $P_2$ uniformly selects a feasible way to embed the edges sets $\in\mathcal{D}^{\sigma^{s}}$ for $s\in\{0,1,\cdots,11\}$, and assigns positive signs to these edges.

\end{description}  
We will show that the above two processes uniformly generate a graph in the corresponding family in the next lemma.  
  
\begin{proposition} \label{proposition:random_process} 
For every algorithm $\mathcal{A}$ that uses $T$ queries and for each $\alpha \in\{1,2\}$, the process $P_{\alpha}$ uniformly generates graphs in $\mathcal{G}_{\alpha}^{N}$ when interacting with $\mathcal{A}$.  
\end{proposition}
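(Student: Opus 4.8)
The plan is to certify that $P_\alpha$ is a faithful ``lazy'' implementation of the ``eager'' experiment in which one first draws $g$ uniformly from $\mathcal{G}^N_\alpha$ and then answers every query of $\mathcal{A}$ honestly from $g$; since the eager experiment trivially outputs a uniform $g$, so does $P_\alpha$. Fix $\mathcal{A}$'s internal coins, so that $\mathcal{A}$ may be treated as deterministic. I would prove, by induction on the number $t$ of answered queries, the invariant that conditioned on any reachable query--answer history $h$ of length $t$, the (implicit) graph is uniformly distributed on $\mathcal{G}^N_{\alpha,h}$. The base case $t=0$ is immediate since $\mathcal{G}^N_{\alpha,\emptyset}=\mathcal{G}^N_\alpha$. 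For the step, if the answer to $q_{t+1}=(v_{t+1},i_{t+1})$ is already determined by $h$ (the edge of $v_{t+1}$ at port $i_{t+1}$ appears in $h$), both the honest oracle and $P_\alpha$ return it and the conditioning is unchanged; otherwise the answer is a ``fresh'' vertex $u$, and preserving the invariant is exactly the counting identity $\mathrm{p}_u = |\mathcal{G}^N_{\alpha,h'}| / |\mathcal{G}^N_{\alpha,h}|$, where $h'$ is $h$ with $(q_{t+1},u)$ appended and $\mathrm{p}_u$ is the probability prescribed by the first-stage rule of $P_\alpha$. Together with the observation that the second stage of $P_\alpha$ is, by construction, a uniform draw from $\mathcal{G}^N_{\alpha,h_T}$, the invariant at $t=T$ then yields a uniform output.

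The counting identity follows from the product structure of $\mathcal{G}^N_\alpha$. The vertex labels are governed by a balanced urn: of the $N-\sum_p n_p$ not-yet-seen vertices exactly $N/10-n_p$ eventually carry label $p$, which is precisely the rule in step (1) and in the ``label the fresh vertex $\overline{p}$'' branch. Given the labels, an edge set $\mathcal{D}^{\sigma}$ sitting on a fixed port pair --- ports $\{1,2\}$ for $\sigma^{\text{1st}}$, $\{3,4\}$ for $\sigma^{\text{2nd}}$, $\{5,6\}$ for $\sigma^{\text{3rd}}$ when $\alpha=1$, and analogously for $\alpha=2$ --- is, up to the single-cycle caveat below, a uniformly random tuple of bijections, one between each consecutive pair of label classes around the cycle $\sigma$ (any such tuple yields a valid union of cycles, so there is no extra global constraint). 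Conditioning a uniform bijection on a partial specification leaves the image of the next unassigned argument uniform over the still-unassigned targets, namely the $N/10-n_{\overline{p}}$ unseen vertices destined for label $\overline{p}$ together with the seen ones in $X_{\overline{p},\overline{i}}$; this reproduces exactly the biased-coin-then-uniform rule of $P_\alpha$, with $\overline{p}$ the class dictated by the port of $i_{t+1}$. Because the port pairs are disjoint, the edge sets are mutually independent given the labels, so the transition probability factorizes and equals $\mathrm{p}_u$ as required.

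The delicate point --- and the step I expect to be the main obstacle --- is the single-cycle caveat: in each family the distinguished edge set must form one \emph{Hamiltonian} cycle rather than an arbitrary union of $10$-cycles (the arc edges, on ports $\{1,2\}$, for $\alpha=1$; the edges on ports $\{5,6\}$ for $\alpha=2$), so the corresponding $\mathcal{D}^{\sigma^{\text{1st}}}$ is only the set of bijection tuples whose composition around $\sigma^{\text{1st}}$ is a single $(N/10)$-cycle, and the ``conditioning leaves the next target uniform'' argument does not apply verbatim --- indeed a revealed Hamiltonian-edge path whose two endpoints happen to have consecutive labels could be ``closed'' into a short cycle by $P_\alpha$'s local rule, which no graph of $\mathcal{G}^N_\alpha$ permits. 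I would resolve this by restricting to the event $E$ that the transcript never reveals a cycle among these edges, i.e.\ they always form vertex-disjoint paths (equivalently, the two processes remain ``statistically identical'' in the sense of the main proof). On $E$ one shows, by exhibiting for any two valid next-endpoints a bijection between the corresponding sets of Hamiltonian completions of the revealed path system, that the true conditional next-endpoint distribution is again uniform over the free targets, so the induction goes through unchanged; and $E$ has probability $1-o(1)$ whenever $T<\sqrt{N}/10$, which is exactly what is quantified downstream in \cref{sec:4_3} and \cref{lemma:statistical_diff}, so for the purposes of the lower bound $P_\alpha$ reproduces the uniform distribution on $\mathcal{G}^N_\alpha$ up to this negligible event (alternatively, reading $P_\alpha$ as rejecting transcripts admitting no feasible completion gives exact uniformity conditioned on acceptance). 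The remaining work is the routine bookkeeping of the urn and bijection counts, together with checking that the $\overline{p}$-assignments defining $P_1$ and $P_2$ are the ones forced by the relevant permutations $\sigma^{\text{1st}},\sigma^{\text{2nd}},\sigma^{\text{3rd}}$ (resp.\ the $\sigma^{s},\sigma^{10},\sigma^{11}$).
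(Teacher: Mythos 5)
Your proposal follows essentially the same route as the paper's proof: induction on the number of answered queries, with the invariant that, conditioned on the history $h$, the hidden graph is uniform on $\mathcal{G}^{N}_{\alpha,h}$, verified by matching the urn/bijection counting of the eager experiment against the transition probabilities of $P_{\alpha}$. The one place you go beyond the paper is the ``single-cycle caveat,'' and you are right to flag it: it is a genuine issue, and the paper's own proof does not address it. In the paper's Case 2, the assertion that the answer is uniformly distributed over $X_{\overline{p},2}$ (together with the fresh vertices) silently includes the far endpoint of $v_T$'s own revealed arc-path; if that endpoint happens to lie in $X_{\overline{p},2}$, answering it would close a cycle of length less than $N$ on ports $\{1,2\}$, which no graph of $\mathcal{G}^{N}_{1}$ admits, so its true conditional probability is $0$ while $P_1$ assigns it positive probability (and symmetrically for the Hamiltonian edge set of $\mathcal{G}^{N}_{2}$ on ports $\{5,6\}$). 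Hence the proposition as literally stated holds only up to this event, exactly as you say. Your resolution --- restricting to the no-revealed-cycle event $E$, proving uniformity over the remaining targets by a bijection between Hamiltonian completions (which works precisely because the port convention orients the cycle, so ``merge two revealed paths'' and ``extend by a fresh vertex'' contribute equally many completions; without the orientation they would differ by a factor of $2$), and absorbing $\Pr[\bar{E}]$ into the bound of \cref{lemma:statistical_diff} --- is the right way to make the downstream argument rigorous, and indeed \cref{sec:4_3} already conditions on no cycle being found. The remaining pieces of your argument (balanced urn for labels, independence of the edge sets across disjoint port pairs given the labels, conditioning a uniform bijection on a partial specification) coincide with the paper's Cases 1--3. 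In short: same approach, with your version being the more careful of the two.
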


\begin{proof} 

We will use induction to prove this lemma. Consider that every probabilistic algorithm can be viewed as a distribution of deterministic algorithms. Therefore, it is sufficient to prove this lemma for any deterministic algorithm $\mathcal{A}$.
The base case (i.e., $T=0$) is correct because the query-answer history is empty, and the second stage in the process $P_{\alpha}$ uniformly generates a graph in $\mathcal{G}_{\alpha}^{N}$.
We assume that the claim is true for $T-1$, and we will prove that the claim is also true for $T$. Let $\mathcal{A}'$ be the algorithm defined by stopping $\mathcal{A}$ before it asks the $T^{th}$ query.
By the inductive assumption, we know that $P_{\alpha}$ uniformly generates graphs in $\mathcal{G}_{\alpha}^{N}$ when $P_{\alpha}$ interacts with $\mathcal{A}'$. We will show that after $P_{\alpha}$ interacts with $\mathcal{A}$ and answers the $T^{th}$ query, the second stage of $P_{\alpha}$ also uniformly generates graphs in $\mathcal{G}_{\alpha}^{N}$.

Assuming, without loss of generality, that the answer to the $T^{th}$ query cannot be obtained from the query-answer history because this query does not provide additional information. Denote the $T^{th}$ query of $\mathcal{A}$ as $q_T=(v_T,i_T)$ and consider all actions of the process $P_{1}$:

\begin{itemize}
    \item 
    \textbf{(Case 1)} $\mathbf{i_T \in \{3,4,5,6\}}$\textbf{, and }$\mathbf{v_T}$ \textbf{ in } $h$:
    
    Assume, without loss of generality, that $i_{T} = 3$ and denote $\overline{p}=\sigma^{\text{2nd}}(p_{v_T})$. 
    The probability of $P_1$ connecting $v_{T}$ to any vertex is independent of the specific order of vertices on the cycle but depends on the labeling of the vertices. 
    After considering all possible connecting edges carried out in the second stage following the interaction with $\mathcal{A}'$, it becomes evident that the only vertices in $h$ to which $v_{T}$ can connect are those in $X_{\overline{p}, 4}$. 
    In any potential arrangement of the vertices on the cycle, there will be exactly $(N / 10)-n_{\overline{p}}$ vertices labeled $\overline{p}$ and available for connection to $v_{T}$. 
    This implies that the probability of $v_{T}$ being connected to a vertex in $X_{\overline{p}, 4}$ is $\frac{\left|X_{\overline{p}, 4}\right|}{\left|X_{\overline{p}, 4}\right|+(N / 10)-n_{\overline{p}}}$. 
    Furthermore, when $v_{T}$ is connected to a vertex in $X_{\overline{p}, 4}$, this vertex is uniformly distributed within $X_{\overline{p}, 4}$. 
    Similarly, when connected to a vertex not in $h$, this vertex is uniformly distributed among the vertices not in $h$. 
    These probabilities align with the definitions in $P_1$. 
    Therefore, in Case 1, the induction step holds for $P_1$.

    \item
    \textbf{(Case 2)} $\mathbf{i_T \in \{1,2\}}$\textbf{, and }$\mathbf{v_T}$ \textbf{in} $h$: 
    
    Assume, without loss of generality, that $i_{T} = 1$ and denote $\overline{p}$ as $\sigma^{\text{1st}}(p_{v_T})$.
    In any valid embedding of the edges in $h$ onto the cycle, it is evident that $v_{T}$ can be adjacent with another vertex $u$ in $h$ only if $u$ belongs to $X_{\overline{p}, 2}$. 
    Moreover, when $v_{T}$ is adjacent to a vertex in $h$, this vertex is uniformly distributed within $X_{\overline{p}, 2}$. 
    If $v_{T}$ is adjacency with another vertex $u$ not in $h$, it is evident that the number of vertices labeled $\overline{p}$ but not in $h$ is $(N / 10)-n_{\overline{p}}$.
    Consequently, the probability of $v_{T}$ being adjacent to some $u \in X_{\overline{p}, 2}$ is $\frac{\left|X_{\overline{p}, 2}\right|}{\left|X_{\overline{p}, 2}\right|+(N / 10)-n_{\overline{p}}}$, and the probability of it being adjacent to a vertex not in $h$ is $\frac{(N / 10)-n_{\overline{p}}}{\left|X_{\overline{p}, 2}\right|+(N / 10)-n_{\overline{p}}}$. 
    These probabilities align with the definitions in $P_1$. 
    Therefore, in this case, the induction step holds for $P_1$.
    \item
    \textbf{(Case 3)} $\mathbf{v_T}$ \textbf{is not in} $h$: 
    
    We can reduce this case to case 1 and 2, provided that the label of $v_{T}$ is selected with the appropriate probability. 
    In the second stage, each vertex is randomly assigned label based on the proportion of missing vertices with that label. 
    This essentially follows the assignment rule outlined in case (1) in the first stage of $P_1$.  
\end{itemize}

For $P_2$, we omit the proof since it is similar to the argument in $P_1$, and this lemma follows.

\end{proof} 

\subsection{Proof of Lemma~\ref{lemma:statistical_diff}}\label{sec:4_3}

We may assume that $\mathcal{A}$ does not make a query whose answer can be obtained from its query answer history $h$ since such a query does not update the $h$. 
Then, we begin the proof by proving the following proposition.  
\begin{proposition} (\cite{goldreich1997property}, Claim in lemma 7.4) 
Both in $\mathbf{D}_{1}^{\mathcal{A}}$ and in $\mathbf{D}_{2}^{\mathcal{A}}$, the total probability mass assigned to query-answer histories in which for some $t \leq T$ a vertex in $h$ is returned as an answer to the $t^{\text {th}}$ query is at most $10 \delta^{2}$. 
\end{proposition}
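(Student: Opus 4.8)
The plan is to adapt the "birthday-paradox"-style argument from Lemma 7.4 of Goldreich--Ron \cite{goldreich1997property} to our two-family setting. The key observation is that, as long as no query of $\mathcal{A}$ has an answer already appearing in the history $h$, the processes $P_1$ and $P_2$ behave \emph{identically}: in both cases the answer to a fresh query $q_t=(v_t,i_t)$ is either a brand-new vertex (with label drawn according to the same ``missing fraction'' rule) or a vertex of the appropriate label class $X_{\overline{p},\overline{i}}$ already in $h$, and the differences between $P_1$ and $P_2$ lie only in the internal bookkeeping of the label $\overline p$, not in the probability of returning an old versus a new vertex. Hence the two distributions $\mathbf{D}_1^{\mathcal A}$ and $\mathbf{D}_2^{\mathcal A}$ are identical conditioned on the ``bad event'' $B$ -- namely that for some $t\le T$ the $t$-th answer is a vertex already in $h$ -- never occurring, and so their statistical distance is bounded by $\operatorname{Prob}[B]$ (computed in either process). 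Thus it suffices to prove the proposition: $\operatorname{Prob}[B]\le 10\delta^2$ in each of $\mathbf{D}_1^{\mathcal A}$ and $\mathbf{D}_2^{\mathcal A}$.

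To bound $\operatorname{Prob}[B]$, I would fix the process (say $P_1$; $P_2$ is analogous) and condition on the history $h$ just before the $t$-th query. Under the assumption that $\mathcal{A}$ never asks a query whose answer is determined by $h$, when $\mathcal A$ issues $q_t=(v_t,i_t)$ with a missing edge, $P_1$ returns an old vertex only by the coin flip of bias $\frac{|X_{\overline p,\overline i}|}{(N/10)-n_{\overline p}+|X_{\overline p,\overline i}|}$. Since $|X_{\overline p,\overline i}|\le n_{\overline p}\le t-1 < T$ and $(N/10)-n_{\overline p}\ge (N/10)-T$, this probability is at most $\frac{T}{(N/10)-T}=\frac{10T}{N-10T}$. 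With $N\ge 40T$ we get $N-10T\ge \tfrac34 N$, so each step contributes at most $\frac{10T}{(3/4)N}=\frac{40T}{3N}$ to the chance of producing an old-vertex answer. (If $v_t$ is itself new, the label is assigned first and the same bound applies; if $i_t\in\{1,2\}$ the identical estimate holds with $X_{\overline p,2}$.) Taking a union bound over the $t\le T$ queries,
\begin{align*}
\operatorname{Prob}[B]\;\le\; T\cdot \frac{40T}{3N}\;=\;\frac{40T^2}{3N}\;\le\;\frac{40\delta^2 N}{3N}\;=\;\frac{40}{3}\delta^2\;<\;10\delta^2\cdot\frac{4}{3},
\end{align*}
so a slightly more careful constant-tracking -- using for instance the sharper bound $(N/10)-T\ge (N/10)-\delta\sqrt N$ together with $\delta<1/2$ and $N\ge 40T$, or summing $\sum_{t=1}^T \frac{10t}{N-10t}$ rather than using the crude per-step bound $\frac{10T}{N-10T}$ -- yields the stated $10\delta^2$. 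Then, as explained, the statistical-distance bound of \cref{lemma:statistical_diff} follows immediately, since $\tfrac12\sum_x|\operatorname{Prob}[\mathbf D_1^{\mathcal A}=x]-\operatorname{Prob}[\mathbf D_2^{\mathcal A}=x]|\le \operatorname{Prob}[B]$.

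The main obstacle I anticipate is not the probabilistic union bound, which is routine, but rather rigorously justifying the \emph{identical-conditioned-on-$\bar B$} claim: one must check, case by case over $i_t\in\{1,\dots,6\}$ and over whether $v_t\in h$, that the \emph{only} place where $P_1$ and $P_2$ differ is the deterministic choice of $\overline p$ (and the fixed sign of the new edge), and that this choice never affects the \emph{event} ``answer is an old vertex'' -- it only relabels which class $X_{\overline p,\overline i}$ the potential old answer would come from. Equivalently, one leans on \cref{proposition:random_process}: since $P_\alpha$ faithfully simulates a uniform graph from $\mathcal G_\alpha^N$, the probability mass on ``collision histories'' in $\mathbf D_\alpha^{\mathcal A}$ equals the corresponding probability for a genuinely uniform $g\in\mathcal G_\alpha^N$, and for such $g$ the bound $10\delta^2$ on encountering a repeated vertex within $T\le\delta\sqrt N$ queries is exactly the graph-theoretic birthday estimate used in \cite{goldreich1997property}. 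I would present the argument in that order: (1) reduce statistical distance to $\operatorname{Prob}[B]$ via the coupling on $\bar B$; (2) prove the proposition bounding $\operatorname{Prob}[B]\le 10\delta^2$ by the per-query estimate above; (3) conclude.
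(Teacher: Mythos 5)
Your proof is essentially the paper's own: the paper bounds the probability that the $t^{\text{th}}$ answer lands on a vertex already in $h$ by $20(t-1)/N$ (there are at most $2(t-1)$ vertices in $h$, and the coin-flip bias returning an old vertex is at most that count divided by roughly $N/10$), and then sums $\sum_{t=1}^{\delta\sqrt{N}} 20(t-1)/N < 10\delta^2$ --- which is precisely the ``more careful'' summation you point to, rather than the crude $T\cdot\max$ union bound that only gives $\tfrac{40}{3}\delta^2$. The one slip is your claim $n_{\overline{p}}\le t-1$: each query-answer pair can add two vertices to $h$, so the correct bound is $2(t-1)$, and with that factor your summation reproduces the paper's constant-tracking (which is itself somewhat loose about lower-bounding the denominator $(N/10)-n_{\overline{p}}+|X_{\overline{p},\overline{i}}|$).
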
 
\begin{proof} 
We begin the proof by claiming that the probability of the event that the answer in the $t^{\text {th}}$ query is a vertex in $h$ is at most $20(t-1)/N$ for every $t\leq T$.  
The statement can be derived by observing that there are at most $2(t-1)$ vertices in $h$, and uses the definition of both processes.  
Then, the probability that the event occurs in an arbitrary query-answer history of length $T$ is at most $\sum_{t=1}^{\delta \sqrt{N}} \frac{20(t-1)}{N}<10 \delta^{2}$. The proposition follows.  
\end{proof} 
From the proposition, we know that the 
edges in $h$ will not form a cycle with probability at least $1-10\delta^2$.
This event implies that for each query, these two processes pick a random vertex uniformly among the vertices, not in $h$. 
In addition, $\mathcal{A}$’s queries can only depend on the previous query-answer histories. 
Therefore, the distributions of the query-answer histories for these two processes
are identical, except if we found a cycle, which happens with probability at most $10\delta^2$. 
\cref{lemma:statistical_diff} follows.

\bibliographystyle{unsrt} 
\bibliography{ref}

\end{document}